\newtheorem{theorem}{Theorem}
\newtheorem{lemma}[theorem]{Lemma}
\newtheorem{proposition}[theorem]{Proposition}
\newtheorem{remark}[theorem]{Remark}
\newenvironment{proof}[1][Proof]{\noindent\textbf{#1.} }{\ \rule{0.5em}{0.5em}}
\begin{document}

%\begin{multicols}{2}\label{myfirstpage}
%\begin{flushleft}
%\includegraphics[width=40mm]{flag.eps}
%\end{flushleft}

%\begin{flushright}
%\textbf{ISSN} 1749-3889 (print), 1749-3897 (online)\\
%\textbf{International Journal of Nonlinear Science}\\
%Vol.5(2008) No.x,pp.xx-xxx
%\pageref{myfirstpage}-\pageref{LastPage}
%\end{flushright}
%\end{multicols}

%\linenumbers
\begin{center}
\Large\textbf{Similarity flow solutions of a non-Newtonian power-law
fluid
 }\\
\ \ \newline  \large Mohamed Guedda, Zakia Hammouch
 \footnote{\textbf{Corresponding author}. \indent \emph{E-mail address}:
zakia.hammouch@u-picardie.fr -- Tel: 00 3 33 22 82 78 41}

%\par \centering Copyright\copyright World Academic Press, World Academic
% Union\\
 %IJNS.2008.04.15/xxx}\\
 \end{center}
\begin{center} {\it LAMFA, CNRS UMR 6140, Universit\'e de Picardie
Jules Verne\\
Facult\'e de Math\'ematiques et d'Informatique,
33, rue Saint-Leu 80039 Amiens, France}

\end{center}

\begin{quote}
\textbf{Abstract:} In this paper we present a mathematical analysis
for a steady-state laminar boundary layer flow, governed by the
Ostwald-de Wael power-law model of an incompressible non-Newtonian
fluid past a semi-infinite power-law stretched flat plate
 with uniform free stream velocity. A generalization of the usual Blasius
similarity transformation is used to find similarity solutions
\cite{blasius}. Under appropriate assumptions, partial differential
equations are transformed into an autonomous third-order nonlinear
degenerate ordinary differential equation with boundary conditions.
Using a shooting method, we establish the existence of an infinite
number of  global unbounded solutions. The asymptotic behavior is
also discussed. Some properties of those solutions depend on the
viscosity power-law index. \\\\ \textbf{Key words:}Boundary-layer,
Power-law fluid, Multiple solutions, Similarity transformation

\end{quote}

\section{Introduction}
In view of their wide applications in different industrial
processes, and also by the interesting mathematical features
presented their equations, boundary-layer flows of non-Newtonian
fluids have motivated researchers in many branches of engineering in
recent years. The most frequently used model in non-Newtonian fluid
mechanics is the Ostwald-de Wael model (with a power-law rheology
\cite{astarita,astin,bohme,denier,howell}), which the relationship
between the shear stress and the strain rate is given as
follows
\begin{equation}\label{1}
 \tau_{xy}=k|u_{y}|^{n-1} u_{y},
\end{equation}
for $n=1$ the fluid is called Newtonian with dynamic coefficient of
viscosity $k$. For $n>1$ the behavior of the fluid is dilatant or
shear-thickening and for $0<n<1$ the behavior is shear-thinning, in
these cases the fluid is non-Newtonian and $k$ is the fluid
consistency. In this work we shall restrict our study to the
dilatant fluids, then throughout all the paper, the exponent $ n $
will be taken in the range $(1,\infty).$
 \noindent The problem of laminar flows of power-law non-Newtonian fluids  have been studied
by several authors. For the sake of brevity, we mention here some
examples, Acrivos et al.\cite{acrivos} and Pakdemirli \cite{pakde}
derived the boundary layer equations of power-fluids, Mansutti and
Rajagobal \cite{mansutti} investigated the boundary layer flow of
dilatant fluids. Adopting the Crocco variable formulation, Nachman
and Talliafero \cite{nachman} established existence and uniqueness
of similarity solution for a mass transfer problem. Filipussi et
al. \cite{filip} obtained similarity solutions and their
properties using a phase-plane formalism. Recently numerical solutions have
been given by Ece and B\"{u}y\"{u}k in \cite{ece} for the
steady laminar free convection over a heated flat plate. \\More recently Guedda \cite{zamp} studied the free convection problem of a Newtonian fluid, he showed the existence of an infinite number of solution and studied their asymptotic behavior. In this work we aim to extend the analysis of \cite{zamp} to the non-Newtonian case, we are interested
also in the effect of the power-law index on the existence and the asymptotic
behavior of solutions.\\The remainder of this work is organized as follows, in the next
section, we introduce the mathematical formulation of the problem, section 3 deals
with some preliminary tools which will be useful in section 4 and 5 to
prove the main results. Finally, we give some concluding remarks in section 6.
\section{Similarity procedure}
The problem is geometrically defined by a semi-infinite  power-law stretched rigid
plate, over which flows a non-Newtonian fluid obeying to
(\ref{1}). The main hypotheses for the mathematical formulation
of this problem are given by:
   \begin{itemize}
    \item Two-dimensional, incompressible and steady-state laminar flow,
    \item Physical properties are taken as constants,
    \item Body force, external gradients pressure and viscous dissipation are neglected.
\end{itemize}
Under these assumptions, and referred to a Cartesian system of
coordinates $ Oxy,$ where $ y = 0 $ is the plate, the $x$-axis is
directed upwards to the plate and the $y$-axis is normal to it, the
continuity and momentum equations can be simplified, within the
range of validity of the Boussinesq approximation \cite{acrivos}, to the
following equations
\begin{equation}\label{2}
\left\{\begin{array}{l}
 u u_{x} + vu_{y}=\nu (|u_{y}|^{n-1}u_{y})_y,\\
 \\
 u_{x}+v_{y}=0,
\end{array}\right.
\end{equation}
\noindent The functions $u$ and $v$ are the velocity components
in the $x-$ and $y-$ directions respectively. \\The boundary
conditions accompanied equation\, (\ref{2}) are given by
\begin{equation}\label{3}
u(x,0) = U_{w}(x), \quad v(x,0) = V_{w}(x),\quad u(x,y) \to 0\quad
\mbox{ as } y \to \infty.
\end{equation}
 \noindent The functions :\, $ U_w(x) =  u_{w}x^m $ is called the stretching
velocity and $u_{w}>0$,  the exponent $ m $ is negative, and
$V_{w}(x)=v_{w}x^{\frac{m(2n-1)-n}{n+1}}$ is the suction/injection velocity where
$v_{w}>0$ for suction and $v_{w} < 0$ for injection.\\
\noindent From the incompressibility of the
fluid we introduce the dimensionless stream function
$\psi=\psi(x,y)$  satisfying\,($ u =\psi_{y},\,\, v = -\psi_{x}$).\\
%\end{center}
Hence equations (\ref{2})  are reduced to the single equation
\begin{equation}\label{4}
\psi_{y}\psi_{xy} -
\psi_{x}\psi_{yy} = \nu (|\psi_{yy}|^{n-1}\psi_{yy})_y.
\end{equation}
The boundary conditions (\ref{3}) are transformed into
\begin{equation}\label{5}
\psi_{y}(x,0) = u_{w}x^m,\quad \psi_{x}(x,0) = - v_{w}x^{\frac{m(2n-1)-n}{n+1}},\quad
\psi_{y}(x,y)\to 0 \mbox{ as } y \to \infty.
\end{equation}
Since the  broad goal of this paper is to obtain similarity
solutions to (\ref{4}),(\ref{5}) we introduce the following
similarity transformations
\begin{equation}\label{6}
  \psi(x,y) := Ax^\alpha f(t),\quad t := B\frac{y}{x^\beta}.
\end{equation}
\noindent Where $A,B,\alpha$ and $\beta$ are real numbers, $f$ is
the transformed dimensionless stream function and $t$ is the
similarity variable. In terms of (\ref{6}), equation (\ref{4})
can satisfy the ordinary differential equation of the shape function :
$f $\begin{equation}\label{7}
(|f''|^{n-1}f'')' +\alpha ff''=(\alpha-\beta){f'}^2,
\end{equation}
where the primes denote differentiation with respect to $t,$ if
and only if the following
\begin{center}
 $\alpha(2-n)+\beta(2n-1) = 1,\,\, \mbox{and} \quad \alpha-\beta=m,$
\end{center}
holds, and the parameters $A,B$ and $\nu$ satisfy
\begin{equation}\label{8}
\nu A^{n-2}B^{2n-1}=1, \quad \mbox{and} \quad m=\alpha-\beta.
\end{equation}
\noindent It follows that
\begin{center}
$\displaystyle \alpha= \frac{1+m(2n-1)}{n+1},\quad \beta= \frac{1+m(n-2)}{n+1}, \quad \mbox{and} \quad p=\frac{m(2n-1)-n}{n+1}.$
\end{center}
\noindent Consequently, we have
\begin{equation}\label{9}
  \psi(x,y) := \nu^{\frac{1}{n+1}} x ^{\frac{1+m(2n-1)}{n+1}}f(t),\quad t :=
 \nu^{-\frac{1}{n+1}}yx^{-\frac{1+m(n-2)}{n+1}}.
\end{equation}
The corresponding boundary conditions (\ref{5}) are expressed as
\begin{equation}\label{10}
f(0) = \frac{-v_w}{A\alpha},\quad f'(0) =\frac{u_w}{A} , \quad f'(\infty) =
\lim_{t\to\infty}f'(t) = 0.
\end{equation}
%---------------------------------
In the remainder, we deal with the following problem
\begin{equation}\label{11}
\left\{\begin{array}{l}
 (|f''|^{n-1}f'')'+\alpha ff''-m{f'}^2=0,\\
\\
f(0) = a,\quad f'(0) = b, \quad f'(\infty) =\lim_{t\rightarrow \infty}f'(t) = 0.
    \end{array}\right.
\end{equation}
%--------------------------------
For Newtonian fluids $(n=1)$, problem (\ref{11}) reads
\begin{equation}\label{12}
\left\{\begin{array}{l}
f''' + \alpha ff''- m {f'}^2= 0,\\
\\
f(0) = a,\quad f'(0) = b,\quad f'(\infty) = 0.
\end{array}\right.
\end{equation}
We notice that this problem
arises from two different contexts in fluid mechanics when looking
for similarity solutions. First, in natural convection along a
vertical heated flat plate, embedded in a saturated porous medium,
where the temperature is a power function with the exponent $m$, for
more details, we refer the reader to \cite{zamp,bbt,cheng} and
the references therein. Equation (\ref{12}) appears
also in the study of the boundary layer flow, of a Newtonian fluid,
adjacent to a stretching surface with a power-law velocity (see
\cite{bzaa,mpk}). In \cite{bbt}, the authors proved that
(\ref{12}) with $a=b=0$, has a  solution (which is bounded) if $
m \geq - \frac{1}{3} $ and this solution is unique for $ 0 \leq
m\leq \frac{1}{3}.$  In \cite{bzaa} the author gives a complete study about existence and nonexistence of solutions to (\ref{12}), where $b=1$.\\ Recently some new results have been obtained in
\cite{zamp}. The author considered the problem (\ref{12}), where $  m \in (-\alpha,0).$ He showed that, under some assumptions, problem  (\ref{12}) has  an infinite number of unbounded solutions and
these solutions satisfy $ f(t) \sim t^{\frac{\alpha}{\alpha-m}},$  as $ t $ goes to infinity.\\ Based on the results of \cite{zamp}, the interest in this work will be in existence and asymptotic behavior of solutions of problem (\ref{11}).

\section{Preliminary results}
As it is announced above, the existence of solutions will be
established by a shooting method. We replace the boundary condition
at infinity by $f''(0) = d, $ where $ d\not=0. $ Therefore, we consider the initial value problem
\begin{equation}\label{ivp}
\left\{\begin{array}{l}
 (|f''|^{n-1}f'')'+\alpha ff''-m{f'}^2=0,\\
\\
f(0) = a,\quad f'(0) = b,\quad f''(0) = d.
\end{array}\right.
\end{equation}

\noindent We shall see that for appropriate $ d $ problem (\ref{ivp}) has a global unbounded
solution and this solution satisfies the boundary condition at
infinity.
\begin{remark}
{\rm We notice that for $ n \not=1,$ equation
(\ref{11})$_1$ can be  degenerate or singular at the point $
t_0 $ where   $ f''(t_0) = 0.$ The existence the $ t_0
$ is done for  $ d > 0. $  We shall see also that $ f'''$
is not bounded at $t_0$ (the solution $ f $  is then not
classical). By a solution to (\ref{11}) we
will mean a function $ f \in C^2(0,\infty) $ such that $ |f''|^{n-1}f''\in C^1(0,\infty), f'(\infty) = 0$ and $f''(\infty)=0$. Note also that  any solution is classical
on any interval where  the second derivative does not change the sign.}
\end{remark}
 Consider now the initial value problem (\ref{ivp})
with $ n > 1,   a, d \in \mathbb{R}, b \geq 0 $ and $m\in (-\alpha,0).$ \\By the classical theory of ordinary differential equations the above problem has  local (maximal) solutions on some interval $(0,T_d), T_d \leq \infty $ and they are uniquely determined by $ d $ ($ d\not=0).$ Let us denote this such solution by $ f_d.$  Integrating (\ref{ivp})$_1$ to het the following identity
\begin{equation}\label{13}
|f''_d|^{n-1}f''_{d}(t) +\alpha
f'_{d}(t)f_d(t) = |d|^{n-1}d +\alpha a
b+(m+\alpha)\int_0^t{f'_d}(s)^2 ds ,\quad \forall \ t <T_d.
\end{equation}
which will be used later for proving some results.\\
A solution $f_d$ of (\ref{11}),  is of class $ C^2$ on $[0,T_d),$ and satisfies $ |f''_{d}|^{n-1}f''_d\in C^1([0,T_d)).$
We shall  investigate whether $ f_d $ admits an entire extension.  First, we give the following result characterizing the existence time $T_d$.
\begin{proposition} \label{cop} Let $ f_d $ be the local solution to (\ref{ivp}), if $T_d$ is finite then the functions $f_d$, $f'_d$ and $f''_d$ are unbounded as $t$ approaches $T_d$ from below.
\end{proposition}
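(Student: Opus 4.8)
The plan is to prove the contrapositive statement in its natural ``blow-up'' form: if $T_d<\infty$, then at least one of $f_d,f_d',f_d''$ must become unbounded as $t\uparrow T_d$. The standard principle from ODE theory is that a maximal solution on a finite interval cannot have all of its relevant quantities (here $f_d$, $f_d'$, $f_d''$, and also $|f_d''|^{n-1}f_d''$) remain bounded up to the endpoint, for otherwise the solution could be continued past $T_d$, contradicting maximality. So the real content is to argue that boundedness of the three displayed derivatives near $T_d$ forces a genuine contradiction with the maximality of the existence interval $(0,T_d)$.

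First I would suppose, for contradiction, that $f_d$, $f_d'$ and $f_d''$ all remain bounded on $[0,T_d)$. The natural state variables for this first-order system are $f_d$, $f_d'$ and the combination $g:=|f_d''|^{n-1}f_d''$, since equation (\ref{ivp})$_1$ reads $g' = m{f_d'}^2-\alpha f_d f_d''$ and it is $g$, not $f_d'''$, that is guaranteed continuous (recall the Remark). Boundedness of $f_d''$ is equivalent to boundedness of $g$, because $t\mapsto |t|^{n-1}t$ is a homeomorphism of $\mathbb{R}$. From the assumed bounds, the right-hand side $m{f_d'}^2-\alpha f_d f_d''$ is bounded on $[0,T_d)$, hence $g'$ is bounded, so $g$ is uniformly continuous and extends continuously to $t=T_d$; likewise $f_d$ and $f_d'$, having bounded derivatives, extend to finite limits $f_d(T_d),f_d'(T_d)$. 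Inverting the homeomorphism gives a finite limit $f_d''(T_d)$ as well.

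Next I would reinitialise the system at $t=T_d$ with these finite limiting data $(f_d(T_d),f_d'(T_d),f_d''(T_d))$ and invoke the local existence theory once more to produce a solution on $[T_d,T_d+\varepsilon)$ for some $\varepsilon>0$; pasting this onto $f_d$ yields an extension of the solution strictly beyond $T_d$. This contradicts the definition of $T_d$ as the maximal (right) endpoint of the existence interval. Hence the assumption fails and at least one of $f_d,f_d',f_d''$ is unbounded near $T_d$. One may strengthen this to the claim that \emph{all three} blow up by using (\ref{13}): if, say, $f_d'$ stayed bounded then the integral term $(m+\alpha)\int_0^t {f_d'}^2\,ds$ would stay bounded, so $|f_d''|^{n-1}f_d''+\alpha f_d'f_d$ would be bounded, and coupled with control of $f_d$ this propagates boundedness among the remaining quantities; chasing these implications shows the three cannot blow up separately.

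The main obstacle, and the only place the non-Newtonian exponent $n>1$ genuinely intervenes, is the degeneracy flagged in the Remark: the equation is not in standard explicit form $f_d'''=F(f_d,f_d',f_d'')$ because solving for $f_d'''$ divides by $|f_d''|^{n-1}$, which vanishes where $f_d''=0$. Thus the naive continuation argument in the variables $(f_d,f_d',f_d'')$ is not licensed near a zero of $f_d''$. The remedy is precisely to work with $g=|f_d''|^{n-1}f_d''$ as the third state variable, in which the system $(f_d',f_d'',g')=(f_d',\,|g|^{(1/n)-1}g,\,m{f_d'}^2-\alpha f_d f_d'')$ has a continuous right-hand side even through $g=0$ (since $s\mapsto |s|^{(1/n)-1}s$ is continuous), so Peano existence applies at the reinitialisation step. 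Verifying that this reformulated system legitimately captures the solution concept of the Remark, and that continuation in these variables is valid across points where the original equation degenerates, is the delicate step that must be handled with care.
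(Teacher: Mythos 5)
The paper itself gives no argument here (the proof is deferred to the references), so there is nothing to match line by line; your continuation argument is the standard one behind those citations, and you adapt it correctly to the degenerate setting. In particular, the choice of $\bigl(f_d,f_d',g\bigr)$ with $g=|f_d''|^{n-1}f_d''$ as state variables, the observation that the right-hand side $\bigl(f_d',\,|g|^{1/n-1}g,\,m{f_d'}^2-\alpha f_d f_d''\bigr)$ stays continuous through $g=0$ so that Peano applies at the reinitialisation, and the remark that $s\mapsto|s|^{n-1}s$ is a homeomorphism, are exactly the points that need care for $n>1$, and you handle them. This part establishes cleanly that the triple $(f_d,f_d',f_d'')$ cannot remain bounded on $[0,T_d)$ when $T_d<\infty$, which is all that the paper ever uses (e.g.\ at the end of the first step of the proof of Proposition \ref{p2}).

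The gap is in your final strengthening to the literal statement that \emph{all three} functions are unbounded. Your implication-chasing via (\ref{13}) works in one direction only: on a finite interval, $f_d'$ bounded forces $f_d$ bounded, and then (\ref{13}) bounds $|f_d''|^{n-1}f_d''$ in terms of $f_d f_d'$ and $\int_0^t{f_d'}^2$, so $f_d''$ is bounded and the continuation argument applies; similarly $f_d''$ bounded propagates downward by integration. But the remaining case --- $f_d$ bounded while $f_d'$ and $f_d''$ blow up --- is not excluded by any of these implications, because every term on the right of (\ref{13}) that could control $f_d''$ involves $f_d'$, not $f_d$ alone. Ruling out this case is genuinely harder: it is precisely the content of the first half of the proof of Proposition \ref{p2}, which needs the sign analysis of $f_d''$ from Proposition \ref{p1} (either $f_d''$ vanishes once and is then negative, or it keeps a fixed sign) together with the integrability of $(-f_d'')^n$ and the energy function (\ref{16}). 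So as written your proposal proves that $f_d'$ and $f_d''$ are unbounded and that the solution cannot be continued, but not that $f_d$ itself is unbounded; that claim requires the additional structure-specific argument and cannot be obtained by ``chasing'' (\ref{13}).
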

\begin{proof}{\rm
Similar to \cite{bzaa,coppel}.}
\end{proof}\\

\noindent Let us note also that if we require a classical solution of (\ref{ivp}) (ie. $ f \in C^{3}(0,\infty)),$ it
is possible that $ f $ ceases to exist at some $ T < \infty $ and such that $ f, f'$ and $ f'' $ remain bounded on $
[0,T).$ More precisely we have the following result.
\begin{proposition} \label{p1} Let $ f_d $ be the local solution to (\ref{ivp}) where $n>1$ ans $d\not=0.$ Assume that there exists $ t_0 \in (0,T_d) $ such that $
f''_{d}(t_0) = 0. $ Then $d > 0,  f''_d < 0 $ on $(t_0,T_d) $
and $f'''_d$ is unbounded on $ (0,t_0) $.
\end{proposition}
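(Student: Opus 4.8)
The plan is to track the sign of the flux variable $g:=|f_d''|^{n-1}f_d''$, which is of class $C^1$ by the standing regularity assumption, has the same sign as $f_d''$, and (from equation (\ref{ivp})$_1$) satisfies
\begin{equation}\label{gprime}
g'(t)=m\,f_d'(t)^2-\alpha\, f_d(t)f_d''(t).
\end{equation}
The single fact driving the whole argument is this: at any point $t^\ast$ where $f_d''(t^\ast)=0$, formula (\ref{gprime}) collapses to $g'(t^\ast)=m\,f_d'(t^\ast)^2$, which is \emph{nonpositive} since $m<0$. Thus $g$ can never be increasing through a simple zero of $f_d''$, and it can be stationary there only if $f_d'(t^\ast)=0$ as well.

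First I would prove $d>0$. Let $t_0$ be the first zero of $f_d''$ (which exists by hypothesis), so that $f_d''$ keeps the sign of $d$ throughout $(0,t_0)$. Suppose, for contradiction, that $d<0$. Then $g<0$ on $(0,t_0)$ and $g(t_0)=0$, so $g$ increases up to $0$ and $g'(t_0)\ge 0$; combined with $g'(t_0)=m\,f_d'(t_0)^2\le 0$ this forces $f_d'(t_0)=0$. But on $(0,t_0)$ we have $f_d''<0$, so $f_d'$ is strictly decreasing from $f_d'(0)=b\ge 0$ down to $f_d'(t_0)=0$. If $b=0$ this is already impossible, since then $f_d'<0$ immediately to the right of $0$; hence $b>0$ and $f_d'>0$ on $[0,t_0)$. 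This leaves only the degenerate coincidence $f_d'(t_0)=f_d''(t_0)=0$, which I would exclude (see the final paragraph). Once $d<0$ is ruled out and $d\neq 0$, we conclude $d>0$.

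Next, with $d>0$ established, $f_d''>0$ on $(0,t_0)$, so $f_d'$ is increasing from $b\ge 0$ and therefore $f_d'(t_0)>0$. Then (\ref{gprime}) gives $g'(t_0)=m\,f_d'(t_0)^2<0$ \emph{strictly}, so $g$ crosses zero transversally and becomes negative just to the right of $t_0$; hence $f_d''<0$ on an interval $(t_0,t_0+\varepsilon)$. To upgrade this to all of $(t_0,T_d)$ I would argue that $f_d''$ cannot return to zero: a next zero $t_1>t_0$ reached from $f_d''<0$ would, by the same monotonicity of $g$ argument as above, force $f_d'(t_1)=0$, again a degenerate coincidence to be excluded. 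This yields $f_d''<0$ on $(t_0,T_d)$. For the last assertion, on any interval where $f_d''\neq 0$ one has $f_d'''=g'/\bigl(n|f_d''|^{n-1}\bigr)$; as $t\to t_0^-$ the numerator tends to $g'(t_0)=m\,f_d'(t_0)^2\neq 0$ while $|f_d''|^{n-1}\to 0^+$ because $n>1$, so $|f_d'''|\to\infty$ and $f_d'''$ is unbounded on $(0,t_0)$.

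The hard part will be excluding the degenerate coincidence $f_d'(t^\ast)=f_d''(t^\ast)=0$, since a local power-counting analysis (balancing $g'\sim\alpha f_d(t^\ast)f_d''$ against $g'=n|f_d''|^{n-1}f_d'''$) shows that a solution behaving like $f_d'\sim c\,(t^\ast-t)^{n/(n-1)}$ is admissible by the equation alone; the Hölder (non-Lipschitz) degeneracy at $f_d''=0$ means a naive uniqueness argument does not immediately apply. My plan here is to observe that $\bigl(f_d(t^\ast),0,0\bigr)$ is an equilibrium of the first-order autonomous system obtained by taking $(f_d,f_d',g)$ as unknowns, and to exclude a nonconstant trajectory reaching it by a continuation/uniqueness argument anchored at the nondegenerate initial point $t=0$ (where $f_d''(0)=d\neq 0$ gives uniqueness, as already noted after (\ref{ivp})), together with the sign information $f_d'>0$ on the approach; identity (\ref{13}) evaluated at $t^\ast$ then supplies the compatibility relation that pins down the admissible data and rules out the scenario. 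I expect this exclusion, rather than the sign bookkeeping, to be where all the real care is needed.
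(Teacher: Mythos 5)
Your sign bookkeeping via $g=|f_d''|^{n-1}f_d''$ and the observation that $g'(t^\ast)=m\,f_d'(t^\ast)^2\le 0$ at any zero of $f_d''$ is correct, and it cleanly delivers the parts of the statement the paper itself proves: for $d>0$ you get $f_d'(t_0)>0$, hence $g'(t_0)<0$ strictly, hence $f_d''<0$ just past $t_0$; and your computation $f_d'''=g'/(n|f_d''|^{n-1})$, with numerator tending to $m f_d'(t_0)^2\neq 0$ and denominator to $0^+$, is essentially the paper's argument for the unboundedness of $f_d'''$ (the paper phrases it contrapositively: if $f_d'''$ were bounded, the equation would force $f_d'(t_0)=0$). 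Where you diverge from the paper is the claim $d>0$: the paper never argues at the zero of $f_d''$. On an interval where $f_d''<0$ it rewrites the equation as $(f_d''e^{F})'=\frac{m}{n}e^{F}|f_d''|^{1-n}{f_d'}^2\le 0$ with $F=\frac{\alpha}{n}\int_0^t f_d|f_d''|^{1-n}ds$, so that $f_d''(t)e^{F(t)}\le d<0$ and $f_d''$ cannot return to zero at all. That monotone barrier is exactly the tool your argument lacks.

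The genuine gap is that both your proof of $d>0$ and your extension of $f_d''<0$ from $(t_0,t_0+\varepsilon)$ to all of $(t_0,T_d)$ reduce to excluding a simultaneous zero $f_d'(t^\ast)=f_d''(t^\ast)=0$, and you never exclude it: your final paragraph is a plan, not an argument. Moreover, the plan cannot succeed as a pure uniqueness/continuation step. As your own power counting shows, the profile $f_d'\sim c\,(t^\ast-t)^{n/(n-1)}$ is locally consistent with the equation; the point $(f_d(t^\ast),0,0)$ is a non-Lipschitz equilibrium of the first-order system, so backward uniqueness is precisely what fails there; and identity (\ref{13}) evaluated at $t^\ast$ only yields $|d|^{n-1}d+\alpha ab=-(m+\alpha)\int_0^{t^\ast}{f_d'}^2\,ds$, a relation that is perfectly satisfiable (for instance when $a\le 0$ and $d<0$), so it pins nothing down. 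To close the gap you need a quantitative estimate keeping $f_d''$ away from zero, which is what the paper's integrating factor supplies; and even that route requires checking that $F$ stays finite up to the putative zero, i.e.\ that $\int|f_d''|^{1-n}$ converges there, which is the one place where the degeneracy of the equation genuinely bites.
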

\begin{proof} {\rm Assume first that $ d < 0.$ Therefore
$f''_d< 0 $ on $[0,\varepsilon), \varepsilon $ small, and the
following equation
\begin{equation}\label{14}
n|f''_d|^{n-1}f''_d+ \alpha ff'' - mf'^2 = 0,
\end{equation} holds
on $ (0,\varepsilon).$ Hence
\begin{equation}\label{15}
(f''_{d} e^{F})'= \frac{m}{n}e^{F}|f''_d|^{1-n}{f'_d}^2,\quad \mbox{on}\quad (0,\varepsilon),
\end{equation}
where
\[ F(t) = \frac{\alpha}{n}\int_0^t  f_d |f''_d|^{1-n}(s)ds.\] Consequently,  the function $ t\to  f''_d
e^{F}(t) $ is decreasing, and then $ f''_{d}(t)$ remains negative for all $ t \in
[0,T_d).$ A contradiction. Then $ d > 0.$ Actually, we have $f''_d> 0, f'_d> b $ on $ (0,t_0) $ and equation
(\ref{14}) holds on $(0,t_0).$ Assume now that $ f'''_d$
is bounded on $ (0,t_0).$ Thanks to equation (\ref{11})$_1$ we deduce that $
f'_d(t_0) = 0$ this is contradiction with $f'(0)> b$.}
\end{proof}\\

\noindent From the above we can see, in particular, that $ f''_d < 0 $ on $
(0,T_d) $ for any $ d < 0. $ Then $ f_d \in C^\infty([0,T_d)).$ While for the
case $ d > 0 $ the solution $ f_d $ is not classical.
\begin{proposition}\label{p2}
Let $ f_d $ be the local solution to (\ref{ivp}) for $d\neq 0$ and $n>1$.
If $ T_d < \infty $ then $\lim_{t\rightarrow T_d}  f_d(t) = -\infty.$
\end{proposition}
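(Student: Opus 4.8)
The plan is to assume $T_d<\infty$ and to combine Proposition \ref{cop} (which gives that $f_d,f'_d,f''_d$ are all unbounded as $t\to T_d^-$) with a determination of the sign of $f''_d$ near $T_d$. The mechanism is that near $T_d$ the second derivative keeps a fixed sign, so $f'_d$ is eventually monotone; an unbounded monotone function on a finite interval must run off to $+\infty$ or $-\infty$, and tracking these signs will force $f_d\to-\infty$.

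First I would pin down the sign of $f''_d$ close to $T_d$. From the Remark after Proposition \ref{p1}, if $d<0$ then $f''_d<0$ on all of $(0,T_d)$; and by Proposition \ref{p1}, if $d>0$ and $f''_d$ vanishes at some $t_0$, then $f''_d<0$ on $(t_0,T_d)$. In either case $f''_d<0$ on some interval $(t_1,T_d)$. The only remaining alternative is $d>0$ with $f''_d>0$ throughout $(0,T_d)$, and the crux of the argument is to show that this convex case cannot occur with $T_d<\infty$.

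This exclusion is the main obstacle, and it is the only place where the equation itself (and the sign $m<0$) is genuinely used; everything else is soft. Suppose $d>0$ and $f''_d>0$ on $(0,T_d)$ with $T_d<\infty$. Then $f'_d$ is increasing with $f'_d\ge b\ge0$, so $f_d$ is increasing; being unbounded by Proposition \ref{cop}, it tends to $+\infty$, so $f_d>0$ on some $(t_2,T_d)$. On this interval the equation rearranges to
\[
f'''_d=\frac{m(f'_d)^2-\alpha f_d f''_d}{n(f''_d)^{\,n-1}},
\]
whose right-hand side is negative because $m<0$, $\alpha>0$ and $f_d,f'_d,f''_d>0$. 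Hence $f''_d$ is decreasing and therefore bounded on $(t_2,T_d)$, contradicting Proposition \ref{cop}; so necessarily $T_d=\infty$, and the convex case is ruled out.

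It then remains to harvest the conclusion by monotonicity alone. Knowing $f''_d<0$ on $(t_1,T_d)$, the derivative $f'_d$ is decreasing there and bounded above by $f'_d(t_1)$; since it is unbounded by Proposition \ref{cop}, it must be unbounded below, so $f'_d\to-\infty$. Consequently $f'_d<0$ on some $(t_2,T_d)$, making $f_d$ decreasing there and bounded above by $f_d(t_2)$; being unbounded by Proposition \ref{cop}, it is unbounded below, and a decreasing function that is unbounded below satisfies $\lim_{t\to T_d}f_d(t)=-\infty$, which is the claim.
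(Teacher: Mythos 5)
Your proof is correct in its logic, but it takes a genuinely different route from the paper's. The paper first shows $\sup_{[0,T_d)}|f_d|=\infty$ by contradiction, using the integrated identity (\ref{13}) together with an integrability argument for $(-f''_d)^n$ (and, in the all-convex case, the inequality $(f''_d)^{n-2}f'''_d\leq -\tfrac{\alpha}{n}f_d$); it then upgrades this to $|f_d|\to\infty$ by monotonicity, and finally excludes the alternative $f_d\to+\infty$ via the Lyapunov function $E=\tfrac{n}{n+1}|f''_d|^{n+1}-\tfrac{m}{3}(f'_d)^3$, whose monotonicity bounds $f'_d$ and $f''_d$. You instead organize everything around the sign of $f''_d$ near $T_d$: Proposition \ref{p1} reduces matters to excluding the globally convex case, which you do by reading the sign of $f'''_d$ directly off the equation (using $m<0$, $\alpha>0$ and $f_d\to+\infty$), and then you harvest the conclusion by two applications of ``monotone plus unbounded on a finite interval forces divergence to $-\infty$.'' This is more elementary --- no energy function, no integrability of $(-f''_d)^n$ --- and it cleanly isolates where the structure of the equation is used. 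The one caveat is that your final two steps lean on the \emph{strong} reading of Proposition \ref{cop}, namely that each of $f_d$, $f'_d$, $f''_d$ is \emph{individually} unbounded (you need $f'_d$ unbounded to get $f'_d\to-\infty$, and then $f_d$ unbounded to get $f_d\to-\infty$, since $f'_d\to-\infty$ alone does not preclude $f_d$ staying bounded on a finite interval). The paper's argument only ever contradicts the weaker continuation alternative (not all three can stay bounded), which is what the cited Coppel-type results most directly provide; since the paper states Proposition \ref{cop} in the strong form you are formally entitled to it, but your proof is more exposed if only the weak form is actually established. The unboundedness of $f'_d$ can be recovered from the weak form via (\ref{13}), but the unboundedness of $f_d$ in your last step would need a further argument.
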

\begin{proof}{\rm  First we show that $ \sup_{[0,T_d)}|f_d(t)|= \infty.$  Suppose not and $ f''_{d}(t_0)= 0 $ holds,
for some $ t_0 \in (0,T_d). $  From (\ref{13}) we get
\[-( -f''_d)^{n}(t) +
\alpha f'_{d}(t)f_d(t) = \alpha f'_{d}(t_0)f_d(t_0) +
(m+\alpha)\int_{t_0}^t{f'_{d}(s)}^2ds,\quad \forall\ t_0 < t <
T_d.\] Hence
\[\frac{\alpha}{2}f_d^2(t)-\alpha f'_{d}(t_0)f_d(t_0)(t-t_0)
- \frac{\alpha}{2}f_d^2(t_0) = (m+\alpha)\int_{t_0}^t\int_{t_0}^\tau
{f'_d}^2(s)ds d\tau + \int_{t_0}^t(-f''_d)^{n}(s)ds.
\]
Since the right-hand side of the above  is  monotonic increasing with respect to $t$, the function $ f_d $ has a finite limit as $ t\to T_d.$ Consequently the function  $ (-f''_d)^{n} $ is integrable on $ (t_0,T_d)$.
Since $ n > 1 $ we deduce that $ f''_d$ is also
integrable on $ (t_0,T_d).$ Therefore  the function $ f'_d $
is bounded. Next we use  (\ref{13}) to deduce that $ f''_d $
is also bounded. A contradiction with Proposition \ref{cop}. \\It remains to prove that the hypothesis
$ f''_d> 0 $ on $ (0,T_d) $ leads also to a contradiction. Actually, in such situation, we know that $ f_d $ is classical and satisfies (\ref{14}), which yields to
\[(f''_d)^{n-2}f'''_d \leq -\frac{\alpha}{n}f_d, \]
and then
\[ (f''_d)^{n-2}f'''_d\leq \frac{\alpha}{n}\sup_{[0,T_d)}|f_d(t)|.\] Therefore $
f''_d $  and  $ f'_d$ are bounded. A contradiction.\\
Because$ f_d $ is monotonic on $(\tau, T_d), $ for some $ 0
<\tau < T_d, $ we deduce that $ \vert f_d(t) \vert $ goes to infinity as $ t\to T_d.$ Finally, to show that $ f_d(t)\to -\infty $
as $ t\to T_d$ we assume on the contrary that  $ \lim_{t\to T_d}f_d(t)=\infty.$ Hence the functions $ f_d $ and
 $ f'_d $  are
positive on $ (\tau,T_d).$ Moreover, using
(\ref{11})$_1,$ we can deduce from the Energy-function defined by
\begin{equation}\label{16}
E(t) = \frac{n}{n+1}|f''_d(t)|^{n+1} -
\frac{m}{3}{f'_{d}}^3,
\end{equation}
and satisfies $ E'(t) = -\alpha f_d {f''_d}^2 \leq 0.$ That $ f''_d $ and $ f'_d$  are bounded. Hence $ f_d $ is also bounded and this is a contradiction with Proposition.\ref{cop}. We conclude that if $T_d$ is finite the function $f_d$ goes to minus infinity as $t\rightarrow  \infty$.}
\end{proof}

\section{Existence of solutions}
In this section we shall obtain a
sufficient condition on $d$ such that the local solution $f_d$
of  (\ref{ivp}) is global and satisfies the condition $f'_d(\infty)= 0.$   We show that, for each $ d $ satisfying $ |d|^{n-1}d
>  -\alpha ab,\ f_d $ exists on the entire positive axis $\mathbb{R}^+$  and satisfies $
f'_d(\infty)= 0.$  We begin by a simple observation that: if $
m +\alpha > 0 $ and $|d|^{n-1}d > -\alpha ab,$
(\ref{13}) yields the important fact that $ f_d $ cannot have a
local maximum. Thus we prove the following result.
\begin{theorem}\label{th1} Let $ a \in \mathbb{R}, b\geq 0 $ and $
m \in (-\alpha,0).$ For any $ d $ such that $|d|^{n-1}d >
-\alpha ab,$
  there exists a unique global solution $ f_d, $ to
(\ref{ivp}), which goes to infinity with $ t, $
and its first and second derivative  tend to $ 0 $ as
$ t $ approaches infinity.
\end{theorem}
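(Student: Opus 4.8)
The plan is to run the shooting analysis of the initial value problem (\ref{ivp}) and show that the single sign condition $|d|^{n-1}d>-\alpha ab$ forces the global and asymptotic behaviour; throughout I use that $m+\alpha>0$ and $\alpha>0$ (as $m\in(-\alpha,0)$). First I would show that $f_d$ has no local maximum, hence is increasing. At any point $t^*$ with $f'_d(t^*)=0$, identity (\ref{13}) reads $|f''_d(t^*)|^{n-1}f''_d(t^*)=|d|^{n-1}d+\alpha ab+(m+\alpha)\int_0^{t^*}(f'_d)^2$, whose right-hand side is strictly positive by hypothesis and by $m+\alpha>0$; thus $f''_d(t^*)>0$, so every critical point of $f_d$ is a strict local minimum. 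Since $f'_d(0)=b\geq0$ (and $d>0$ when $b=0$, which the hypothesis forces), $f'_d$ can never return to $0$ from positive values, giving $f'_d>0$ on $(0,T_d)$ and $f_d\geq a$. Being bounded below is incompatible with Proposition \ref{p2} unless $T_d=\infty$, which yields global existence; uniqueness for $d\neq0$ is the standard well-posedness of (\ref{ivp}) already recorded.

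Next I would pin down the sign of $f''_d$ for large $t$. For $d<0$ we already know $f''_d<0$ throughout. For $d>0$, if $f''_d$ vanishes somewhere then Proposition \ref{p1} gives $f''_d<0$ afterwards; the only remaining possibility, $f''_d>0$ on all of $(0,\infty)$, I would exclude with (\ref{13}): there $f'_d$ is increasing and $f_d\to\infty$, so $\frac{d}{dt}(f'_df_d)=f''_df_d+(f'_d)^2>(f'_d)^2$ eventually, whence $\alpha f'_df_d\geq\alpha\int_\tau^t(f'_d)^2+C$; substituting into (\ref{13}) gives $(f''_d)^n\leq\mathrm{const}+m\int_\tau^t(f'_d)^2\to-\infty$ (since $m<0$), contradicting $(f''_d)^n>0$. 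Hence in every case $f''_d<0$ for large $t$, so $f'_d$ is eventually decreasing and $f'_d\to\ell\geq0$.

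The crux is to show $\ell=0$ and $f_d\to\infty$. If $\ell>0$, then $f_d/t\to\ell$ and $\frac1t\int_0^t(f'_d)^2\to\ell^2$, so dividing (\ref{13}) by $t$ yields $t^{-1}|f''_d|^{n-1}f''_d\to(m+\alpha)\ell^2-\alpha\ell^2=m\ell^2<0$, forcing $f''_d\to-\infty$ and contradicting the finiteness of $\lim f'_d=\ell$; hence $f'_d\to0$. For $f_d\to\infty$, suppose instead $f_d\to L<\infty$; then $\int_0^\infty f'_d=L-a<\infty$, and since $f'_d\to0$ is bounded we get $\int_0^\infty(f'_d)^2<\infty$, so the right-hand side of (\ref{13}) tends to $R:=|d|^{n-1}d+\alpha ab+(m+\alpha)\int_0^\infty(f'_d)^2>0$ while its left-hand side tends to $\lim|f''_d|^{n-1}f''_d$; thus $f''_d$ would approach a positive constant, contradicting $f'_d\to0$. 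Therefore $f_d\to\infty$.

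Finally, to get $f''_d\to0$ I would use the energy $E$ of (\ref{16}): for large $t$ we have $f_d>0$, so $E'=-\alpha f_d(f''_d)^2\leq0$, and $E\geq0$ because both terms are nonnegative when $m<0$ and $f'_d\geq0$; hence $E$ converges. Since $f'_d\to0$, this forces $|f''_d|\to c$, and $c>0$ would give $f''_d\to-c$ and thus $f'_d\to-\infty$, so $c=0$. The main obstacle is the asymptotic step: ruling out a positive limit for $f'_d$ and a finite limit for $f_d$, where one must exploit simultaneously the sign $m<0$ and the inequality $m+\alpha>0$ inside (\ref{13}), together with the degenerate behaviour of $f''_d$ permitted by the Remark.
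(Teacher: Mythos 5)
Your argument is correct and rests on the same core ingredients as the paper's proof --- the integrated identity (\ref{13}), the observation that the hypothesis $|d|^{n-1}d>-\alpha ab$ together with $m+\alpha>0$ forbids a local maximum of $f_d$, and the Lyapunov function (\ref{16}) --- but it is organized along a genuinely different decomposition. The paper splits into $a\geq 0$ (Lemmas \ref{l1}--\ref{l2}: global existence is obtained by bounding $f''_d$, $f'_d$ and hence $f_d$ on bounded intervals via $E'=-\alpha f_d(f''_d)^2\leq 0$, which needs $f_d\geq 0$) and $a<0$ (Lemma \ref{l3}: one shows $f_d$ has exactly one zero $t_0$ and translates to $h(t)=f_d(t+t_0)$ to fall back on the first case). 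You treat both signs of $a$ at once: $f'_d>0$ on $(0,T_d)$ gives $f_d\geq a$, and Proposition \ref{p2} (finite $T_d$ forces $f_d\to-\infty$) immediately yields $T_d=\infty$, which is cleaner and dispenses with the translation trick. Your extra step excluding $f''_d>0$ on all of $(0,\infty)$ --- feeding the bound $\alpha f'_df_d\geq\alpha\int_\tau^t(f'_d)^2+C$ back into (\ref{13}) and using $m<0$ --- does not appear in the paper's proof of the theorem (the paper only needs eventual monotonicity of $f'_d$, supplied by Proposition \ref{p1} and the boundedness coming from $E$), but it is sound and mirrors an argument the paper uses inside Proposition \ref{p2}. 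The asymptotic steps (dividing (\ref{13}) by $t$ to rule out a positive limit of $f'_d$, using (\ref{13}) again to rule out a finite limit of $f_d$, and extracting $f''_d\to 0$ from the convergence of $E$) coincide with Lemma \ref{l2}. One cosmetic caveat you share with the paper: uniqueness is invoked only for $d\neq 0$, while the hypothesis formally admits $d=0$ when $ab>0$; this ambiguity is inherited from the statement and is not a gap in your argument.
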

For our analysis, we need to distinguish two cases for the parameter $ a = f_d(0);$ namely $ a \geq 0 $ and $ a < 0. $ First we
prove the following lemma.
\begin{lemma}\label{l1} If $ a \geq 0 $ and $|d|^{n-1}d > - \alpha ab\
$ the functions $ f'_d $ and  $ f_d  $ are positive on $
(0,T_d)$  and $ T_d = \infty; $  that is $ f_d $ is global.
Moreover $f'_d$ and $f''_d$ are bounded.
\end{lemma}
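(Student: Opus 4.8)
The plan is to proceed in three stages: first establish the positivity of $f_d$ and $f_d'$, then deduce globality, and finally extract the boundedness of $f_d'$ and $f_d''$ from the energy functional.

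\textbf{Positivity.} The starting point is the integrated identity (\ref{13}). Since $m+\alpha>0$ and, by hypothesis, $|d|^{n-1}d+\alpha ab>0$, the right-hand side of (\ref{13}) is strictly positive for every $t\in[0,T_d)$, so that
\[
|f_d''|^{n-1}f_d''(t)+\alpha f_d'(t)f_d(t)>0,\qquad 0\leq t<T_d.
\]
I would first observe that $f_d'>0$ on a right-neighborhood of $0$: if $b>0$ this is immediate, while if $b=0$ the hypothesis reduces to $|d|^{n-1}d>0$, hence $d>0$, and $f_d'$ increases from $0$. Suppose $f_d'$ had a first zero $t_1>0$. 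Evaluating the displayed inequality at $t_1$, where $f_d'(t_1)=0$, gives $f_d''(t_1)>0$, so $f_d'$ would be strictly increasing across $t_1$ and therefore negative just to the left of $t_1$, contradicting $f_d'>0$ on $(0,t_1)$. Hence $f_d'>0$ on $(0,T_d)$, and since $f_d$ is then strictly increasing, $f_d(t)>f_d(0)=a\geq 0$ for all $t>0$.

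\textbf{Globality.} Because $f_d>0$ on $(0,T_d)$, it cannot tend to $-\infty$. If $T_d$ were finite, Proposition \ref{p2} would force $f_d(t)\to-\infty$ as $t\to T_d$, a contradiction. Therefore $T_d=\infty$ and $f_d$ is global.

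\textbf{Boundedness.} For the last assertion I would use the energy $E$ introduced in (\ref{16}), which satisfies $E'(t)=-\alpha f_d(f_d'')^2\leq 0$ at every point where $f_d''\neq 0$ (using $f_d>0$). The step I expect to require the most care is precisely the monotonicity of $E$ across a degeneracy point: where $f_d''$ vanishes the equation degenerates and $f_d'''$ is unbounded (Proposition \ref{p1}), so $E'$ is not defined there. However, Proposition \ref{p1} also guarantees that $f_d''$ can vanish at most at a single point $t_0$, after which $f_d''<0$; since $E$ depends only on $f_d'$ and $f_d''$ it is continuous, and being non-increasing on each of $[0,t_0]$ and $[t_0,\infty)$ it is non-increasing on all of $[0,\infty)$. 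Consequently $E(t)\leq E(0)$. As $m<0$ and $f_d'>0$, both summands $\frac{n}{n+1}|f_d''|^{n+1}$ and $-\frac{m}{3}(f_d')^3$ are nonnegative, so each is individually bounded by $E(0)$; this yields at once that $f_d''$ and $f_d'$ are bounded on $[0,\infty)$, which completes the proof.
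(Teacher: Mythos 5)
Your proof is correct and follows essentially the same route as the paper: positivity of $f_d$ and $f_d'$ from the integrated identity (\ref{13}), and boundedness of $f_d'$ and $f_d''$ from the Lyapunov function $E$ of (\ref{16}). The only small deviations are that you close the globality step via Proposition \ref{p2} (positivity rules out $f_d\to-\infty$) where the paper instead bounds $f_d$, $f_d'$, $f_d''$ on bounded intervals and invokes Proposition \ref{cop}, and that you make explicit the monotonicity of $E$ across the possible degeneracy point of $f_d''$, a detail the paper leaves implicit; both are sound.
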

\begin{proof}{\rm Because $ \vert d\vert^{n-1}d + \alpha ab> 0, $  the
first assertion of the lemma is immediate  from (\ref{13}). To
demonstrate that $ T_d = \infty $ it suffices to  show that $ f_d $
remains bounded on any bounded interval $[0,T]$. Let us
consider the Lyapunov function $ E $
for $ f_d $ defined by (\ref{16}). Since
\[ E'(t) = -\alpha f_d{f''_d}^2\leq
0,\] thanks to (\ref{11})$_1$, it is seen that
\[ \frac{n}{n+1}{\vert f''_{d}(t)\vert}^{n+1} -
\frac{m}{3}{f'_d(t)}^3 \leq \frac{n}{n+1}\vert d\vert^{n+1}
-\frac{m}{3}{b}^3,\quad \forall t < T_d.\] This  in turn implies
that $ f''_d, f'_d $ and then $f_d $ are bounded on $ [0,T].$}
\end{proof}

\begin{lemma}\label{l2} If $ a \geq 0 $ and $\vert d\vert^{n-1}d > -\alpha ab,
f_d(t) $ tends to infinity with $ t, f'_d $ and $
f''_d$  tend to zero as $ t\to \infty. $
\end{lemma}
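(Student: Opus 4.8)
The plan is to build on Lemma~\ref{l1}, which already gives us that $f_d$, $f'_d$ and $f''_d$ are positive and bounded on $(0,\infty)$, with $T_d=\infty$. Since $f'_d>0$ and $f_d$ is bounded below by $a\ge 0$, the function $f_d$ is increasing and has a (possibly finite) limit. My first step is to rule out a finite limit for $f_d$ and thereby show $f_d(t)\to\infty$. The Lyapunov function $E(t)=\frac{n}{n+1}|f''_d|^{n+1}-\frac{m}{3}{f'_d}^3$ from \eqref{16} is the main tool: since $E'(t)=-\alpha f_d{f''_d}^2\le 0$ and $E$ is bounded below (both terms are controlled, as $-m/3>0$ makes the second term nonnegative and the first is nonnegative), $E$ has a finite limit, so $\int_0^\infty \alpha f_d {f''_d}^2\,dt<\infty$. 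Because $f_d$ is increasing and positive, this integrability forces $\int_0^\infty {f''_d}^2\,dt<\infty$ unless $f_d$ stays bounded; I would use this to extract that $f''_d\to 0$ along a sequence and then leverage monotonicity arguments.

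Next I would establish $f'_d(\infty)=0$. The key observation, already noted before the statement of Theorem~\ref{th1}, is that $f_d$ cannot have a local maximum when $|d|^{n-1}d>-\alpha ab$; combined with $f'_d>0$ this pins down the sign structure. Since $f'_d$ is positive and bounded, and since $f'_d$ must be eventually monotonic (I would argue $f''_d$ does not oscillate, using that $f''_d$ keeps a sign or that the equation prevents sign changes once $f$ and $f'$ are positive), $f'_d$ converges to some limit $\ell\ge 0$. If $\ell>0$, then $f_d(t)\sim \ell t\to\infty$, which is consistent, but I must then show $\ell$ cannot be positive by returning to the integral identity \eqref{13}: the term $(m+\alpha)\int_0^t {f'_d}^2\,ds$ would grow like $(m+\alpha)\ell^2 t\to\infty$, while the left side $|f''_d|^{n-1}f''_d+\alpha f'_d f_d$ would have to balance it; since $f''_d$ is bounded and $f'_d f_d\sim \ell^2 t$, matching the leading-order $t$ coefficient gives $\alpha\ell^2=(m+\alpha)\ell^2$, forcing $m=0$, a contradiction with $m<0$. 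Hence $\ell=0$ and $f'_d(\infty)=0$.

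Finally, to get $f''_d(\infty)=0$, I would feed $f'_d\to 0$ back into the energy relation: with $\ell=0$ the term $-\frac{m}{3}{f'_d}^3\to 0$, so $\lim_{t\to\infty}E(t)=\frac{n}{n+1}\lim_{t\to\infty}|f''_d|^{n+1}$ exists, meaning $|f''_d|$ converges to some limit $L\ge 0$. Then the integrability $\int_0^\infty f_d{f''_d}^2\,dt<\infty$ together with $f_d(t)\to\infty$ forces $L=0$, since a positive limit $L>0$ would make the integrand behave like $L^2 f_d(t)$, whose integral diverges. Therefore $f''_d(\infty)=0$, completing the proof.

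The main obstacle I anticipate is the rigorous exclusion of the case $\ell>0$ for $f'_d$ and, relatedly, showing $f_d\to\infty$ rather than settling to a finite value. The delicate point is that $f'_d$ is only known to be bounded and positive, not a priori monotone, so I must carefully justify its convergence (via the no-local-maximum property of $f_d$ propagating to monotonicity of $f'_d$, or via an Abel/Tauberian-type argument on $\int {f''_d}^2$) before the leading-order balance in \eqref{13} can be invoked. Getting the asymptotic matching in \eqref{13} clean—extracting the exact coefficient of the linear-in-$t$ growth and using $m\neq 0$—is where the sign hypothesis $m\in(-\alpha,0)$ does the real work.
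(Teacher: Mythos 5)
Your overall strategy coincides with the paper's: the Lyapunov function $E$ of (\ref{16}) to control $f''_d$, eventual monotonicity and boundedness of $f'_d$ to extract a limit $\ell\geq 0$, and the leading-order balance in (\ref{13}) (the coefficient $m\ell^2$ of $t$ must vanish because the left-hand side stays bounded) to force $\ell=0$. Those parts are sound, modulo two small slips: Lemma \ref{l1} does not give positivity of $f''_d$ (only positivity of $f_d$ and $f'_d$, plus boundedness of $f'_d$ and $f''_d$), and the eventual monotonicity of $f'_d$ that you say you "would argue" is exactly what Proposition \ref{p1} supplies, since it shows $f''_d$ can change sign at most once.

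The genuine gap is the first and central assertion of the lemma, namely $f_d(t)\to\infty$. You announce it as your first step, but the argument you actually give there (finiteness of $\int_0^\infty \alpha f_d (f''_d)^2\,dt$ from the decay of $E$) only yields $f''_d\to 0$ along a sequence; it cannot distinguish a bounded increasing $f_d$ from an unbounded one, and you yourself flag this as the main unresolved obstacle at the end. The missing idea is to let $t\to\infty$ in the integrated identity (\ref{13}). If $f_d$ were bounded, then by your own later steps $\ell=0$ and $f''_d\to 0$, so the left-hand side $|f''_d|^{n-1}f''_d+\alpha f'_d f_d$ tends to $0$, whereas the right-hand side tends to $|d|^{n-1}d+\alpha ab+(m+\alpha)\int_0^\infty (f'_d)^2\,ds$, which is strictly positive because $|d|^{n-1}d>-\alpha ab$ and $m+\alpha>0$. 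This contradiction is precisely how the paper concludes $f_d\to\infty$; with that one line inserted, your proof is complete and is essentially the paper's.
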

\begin{proof}{\rm Since  $ f_d^\prime
$ is monotonic on $ (t_1,\infty),\ t_1 ,$ large enough, and bounded
  there exists a $ l \geq 0$
such that
\[  \lim_{t\to \infty} f_d^\prime(t) = l.\]
This implies the existence of a sequence $ (t_n) $ tending to infinity with $ n $ satisfying $ \lim_{n\to+\infty}
f_d^{\prime\prime}(t_n) = 0 $ and then \(\lim_{t\to\infty} f_d^{\prime\prime}(t) = 0,\)
with the help of the energy function $ E. $\\
Now we assume  that $ f_d $ is bounded, therefore $ l=0. $
Subsequently $$\displaystyle \vert d\vert^{n-1}d  +
\alpha ab+(m+\alpha)\int_0^{\infty}f_d^\prime(t)^2dt = 0.$$
 This is impossible. Therefore $ f_d $ is unbounded and then $
\lim_{t\to+\infty}f_d(t) = \infty.$ It remains to prove that $ l =
0.$ Assume on the contrary that $ l > 0.$ Together with (\ref{13}) we get
\[{|f''_{d}|}^{n-1} f''_d(t) = -\alpha l^2t +
(m+\alpha)l^2t + o(t),\]
\[{\vert f''_d\vert}^{n-1} f''_d(t) = m l^2t + o(t),\]
as
$ t $ approaches infinity, that
This is only possible if $ m = 0.$ Consequently $ l = 0.$}
\end{proof}

\noindent Next we  consider the case  $ a < 0. $\\
 The  first simple consequence is that
$ f_d(t) < 0 $ and $ f_d^\prime(t) > 0 $ for small $ t > 0. $ Since
$ f_d $ cannot have a local maximum, we have two possibilities\\
$\bullet$ \, Either $ f_d (t) $ vanishes at a some point and remains positive
after this point.\\
$\bullet$\,  Or $ f_d (t) < 0 $ for all $ t > 0. $ \\  Hence the
proof of Theorem \ref{th1} is completed by the following lemma.
\begin{lemma}\label{l3}
Assume $ a < 0 $ and $ \vert d\vert^{n-1}d > - \alpha ab. $ Then
$ f_d $ has exactly one zero, goes to $\infty$ with $t$, and the functions
$ f_d^\prime, f_d^{\prime\prime} $ converge to $ 0 $ as $ t \to
\infty.$
\end{lemma}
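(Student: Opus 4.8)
The plan is to exploit the monotonicity of $f_d$ forced by the ``no local maximum'' property, and then to rule out the possibility that $f_d$ stays negative by means of the integral identity (\ref{13}). First I would record that the hypotheses $a<0$, $b\ge 0$ and $\alpha>0$ give $-\alpha ab\ge 0$, so the assumption $|d|^{n-1}d>-\alpha ab$ forces $d>0$; thus Propositions \ref{p1} and \ref{p2} apply. Evaluating (\ref{13}) at any point where $f'_d$ vanishes, the left-hand side reduces to $|f''_d|^{n-1}f''_d$ while the right-hand side equals $|d|^{n-1}d+\alpha ab+(m+\alpha)\int_0^t f_d'^2\,ds>0$; hence $f''_d>0$ at every critical point of $f_d$. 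Since $f'_d(0^+)>0$ (because $f''_d(0)=d>0$), this prevents $f'_d$ from ever returning to zero, so $f'_d>0$ on $(0,T_d)$ and $f_d$ is strictly increasing. In particular $f_d>a$, so $f_d$ cannot tend to $-\infty$, and Proposition \ref{p2} then yields $T_d=\infty$.

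Strict monotonicity leaves exactly the dichotomy stated before the lemma: either $f_d$ stays negative for all $t$, or it has a single zero. The heart of the argument is to exclude a bounded solution. Suppose $f_d$ were bounded (hence convergent, being increasing). By Proposition \ref{p1}, $f''_d$ is either positive throughout $(0,\infty)$---in which case $f'_d$ is increasing and positive and $f_d\to+\infty$, against boundedness---or negative beyond some $t_0$, in which case $f'_d$ is eventually decreasing and positive, so $f'_d\to l\ge 0$ with $l=0$ forced by $f'_d\in L^1(0,\infty)$. Letting $t\to\infty$ in (\ref{13}) then forces $|f''_d|^{n-1}f''_d(t)$ to converge to $|d|^{n-1}d+\alpha ab+(m+\alpha)\int_0^\infty f_d'^2\,ds>0$, which is impossible since $f''_d<0$ keeps the left-hand side negative. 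Hence $f_d$ is unbounded, $f_d\to+\infty$, and being strictly increasing from $a<0$ it crosses zero exactly once.

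It remains to establish the asymptotics. Once $f_d$ has crossed zero at some $t^{*}$, one has $f_d>0$ on $(t^{*},\infty)$, so the energy $E$ of (\ref{16}) satisfies $E'=-\alpha f_d (f''_d)^2\le 0$ there; thus $E$ is bounded, and since both $\tfrac{n}{n+1}|f''_d|^{n+1}$ and $-\tfrac{m}{3}f_d'^3$ are nonnegative, $f'_d$ and $f''_d$ are bounded. From here the argument runs exactly as in Lemma \ref{l2}: $f'_d$ is eventually monotone (again by Proposition \ref{p1}) and bounded, so $f'_d\to l\ge 0$; the monotone convergence of $E$ together with a sequence $t_n\to\infty$ along which $f''_d(t_n)\to 0$ yields $f''_d\to 0$; and inserting $f'_d\to l$, $f_d\sim lt$, $\int_0^t f_d'^2\sim l^2 t$ into (\ref{13}) gives $|f''_d|^{n-1}f''_d(t)=ml^2t+o(t)$, which is incompatible with $f''_d\to 0$ unless $l=0$. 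This proves $f'_d,f''_d\to 0$.

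The main obstacle is precisely the exclusion of a bounded (in particular an everywhere-negative) solution: the Lyapunov estimate of Lemma \ref{l1} is useless while $f_d<0$, since there $E'\ge 0$ and $E$ may grow, so the boundedness of $f'_d,f''_d$ cannot be read off the energy during the negative phase. This is why the negative phase must be handled through the integral identity (\ref{13}) and the sign information of Proposition \ref{p1}, reserving the energy function for the region $f_d>0$ where it again decreases.
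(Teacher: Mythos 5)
Your proof is correct and follows essentially the same strategy as the paper: identity (\ref{13}) forces monotonicity and excludes the everywhere-negative (bounded) alternative, and the energy function (\ref{16}) together with the Lemma \ref{l2}-type limit argument handles the asymptotics once $f_d>0$. The only difference is organizational --- the paper translates the solution to its unique zero and invokes Lemmas \ref{l1} and \ref{l2} directly, whereas you re-derive their conclusions in place.
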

\begin{proof} {\rm
Assume that the first assertion holds. Since $ f^\prime_d $ is positive we deduce that $
f_d $ is bounded and then is global.  On the other hand, using
(\ref{13})
  one sees that $ f^{\prime\prime}_d > 0. $  Therefore we get
\(  \lim_{t\to \infty}f_d(t) \in (a,0] \) and \(\lim_{t\to
\infty}f_d^\prime(t)=0,\) since $ f^\prime_d $ is monotonic. This is
absurd since $ f_d^\prime $ is positive and increasing function.
Hence $ f_d $ has exactly one zero, say $ t_0. $ To finish the Proof
of Lemma \ref{l3} and therewith that of Theorem \ref{th1} we note
that the new function \[ h(t) = f_d(t+t_0) \] satisfies equation
(\ref{11})$_1$  and  $$ h(0) \geq 0,\quad h^{\prime\prime}(0) > -\alpha
h(0)h^\prime(0).$$ Therefore we use Lemmas \ref{l1} and \ref{l2}
to conclude.}
\end{proof}\\
In the next result we complete our analysis on the existence of
global solutions by the case  $ b < 0.$
\begin{theorem}\label{th2} Let $ b < 0, a > 0 $ and $m \in (-\alpha,0).$ For any $
d > 0 $ satisfying
\begin{equation}\label{17}
  ad^{n} - \frac{1}{2}b^2d^{n-1} + \alpha a^{2}b > 0.
   \end{equation}
  The unique local solution,
$ f_d $ to (\ref{ivp}) is global unbounded and satisfies $
\lim_{t\to\infty}f_d^\prime(t)=\lim_{t\to\infty}f_d^{\prime\prime}(t)
= 0. $
\end{theorem}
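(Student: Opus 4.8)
The plan is to reduce the case $b<0$ to Theorem~\ref{th1} by a time shift, after locating a point $t_1$ at which $f'_d$ vanishes while $f_d$ is still positive. Since $a>0$, $b<0$ and $d>0$, the solution starts with $f_d$ positive and decreasing and $f'_d$ negative but increasing. The first thing I would record is that, after dividing by $a>0$, condition (\ref{17}) reads $d^{n}+\alpha ab>\frac{b^2 d^{n-1}}{2a}$; in particular the constant $d^{n}+\alpha ab$ appearing on the right of (\ref{13}) is strictly positive. Let $t_1\in(0,T_d]$ be the first time at which either $f'_d=0$ or $f_d=0$. On $[0,t_1)$ we have $f_d>0$ and $f'_d<0$, so (\ref{13}) gives $|f''_d|^{n-1}f''_d = d^{n}+\alpha ab-\alpha f_d f'_d+(m+\alpha)\int_0^t f'^2_d\,ds>0$; hence $f''_d>0$ there and $f'_d$ is strictly increasing.

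The heart of the argument is a bound on how far $f_d$ can fall on $[0,t_1)$. Differentiating equation (\ref{11})$_1$ on the region $f''_d>0$ gives $n|f''_d|^{n-1}f'''_d = m f'^2_d-\alpha f_d f''_d<0$, so $f''_d$ is decreasing and $f''_d\le d$. Combining $|f''_d|^{n}\ge d^{n}+\alpha ab$ (from the displayed identity, discarding the two nonnegative terms) with $|f''_d|^{n}=f''_d\,|f''_d|^{n-1}\le d^{\,n-1}f''_d$ yields the explicit lower bound $f''_d\ge (d^{n}+\alpha ab)/d^{\,n-1}$. Integrating the phase-plane relation $\frac{d}{df_d}\bigl(\tfrac12 f'^2_d\bigr)=f''_d$ from $t=0$ (where $f_d=a$, $f'^2_d=b^2$) to $t=t_1$ (where $f'^2_d=0$) then gives $a-f_d(t_1)\le \frac{b^2 d^{\,n-1}}{2(d^{n}+\alpha ab)}$, and condition (\ref{17}) is exactly what makes the right-hand side strictly smaller than $a$, so $f_d(t_1)>0$. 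This excludes the alternative that $f_d$ reaches zero first (which would force $f_d(t_1)=0$); and since $f''_d$ is bounded below by the positive constant above, $f'_d$ increases at least linearly and must reach $0$ in finite time, so $t_1<\infty$. Thus $f_d(t_1)>0$, $f'_d(t_1)=0$ and $f''_d(t_1)>0$. I expect this drop estimate to be the main obstacle, both because it is the only place (\ref{17}) is used and because one must avoid a circular appeal to the sign of $f_d$ — this is handled by the bootstrap built into the definition of $t_1$.

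Finally I would shift to $t=t_1$. The function $h(t):=f_d(t+t_1)$ solves (\ref{11})$_1$ with $h(0)=f_d(t_1)>0$, $h'(0)=0$ and $h''(0)=f''_d(t_1)>0$, so that $|h''(0)|^{n-1}h''(0)=f''_d(t_1)^{n}>0=-\alpha h(0)h'(0)$. Hence $h$ satisfies the hypotheses of Theorem~\ref{th1} in the case $a\ge0$, $b\ge0$, and Lemmas~\ref{l1} and~\ref{l2} apply: $h$ is global with $h'$ and $h''$ bounded, $h(t)\to\infty$, and $h'(t),h''(t)\to0$ as $t\to\infty$. Undoing the shift shows that $f_d$ is global on $[0,\infty)$ and unbounded, with $\lim_{t\to\infty}f'_d(t)=\lim_{t\to\infty}f''_d(t)=0$; uniqueness is the local uniqueness of (\ref{ivp}) for $d\neq0$ quoted earlier. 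This completes the plan.
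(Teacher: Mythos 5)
Your proposal is correct, but it reaches the key intermediate fact --- that $f_d$ attains a positive local minimum at some finite $t_1$, after which Theorem~\ref{th1} takes over --- by a genuinely different route than the paper. The paper introduces the auxiliary function
\[ H = f_d \vert f_d''\vert^{n-1}f''_d- \frac{1}{2}{f'_d}^2\vert f''_d\vert^{n-1} + \alpha f_d^{2}f_d^\prime,\]
notes that $H(0)$ is exactly the left-hand side of (\ref{17}) and that $H$ is increasing on the region where $f_d>0$, $f'_d<0$, $f''_d>0$, and then rules out the three bad exits from that region ($T=\infty$, $f_d(T)=0$, $f''_d(T)=0$) by showing each would force $H(0)<\lim H\le 0$, contradicting (\ref{17}). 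You instead work quantitatively: from the integrated identity (\ref{13}) you get $\vert f''_d\vert^{n}\ge d^{n}+\alpha ab>0$, from the differentiated equation you get $f''_d\le d$, hence the explicit floor $f''_d\ge (d^{n}+\alpha ab)/d^{n-1}$, and the phase-plane relation $\frac{d}{df}(\tfrac12 f'^2)=f''$ converts this into the drop estimate $a-f_d(t)\le \frac{b^2 d^{n-1}}{2(d^{n}+\alpha ab)}$, which is $<a$ precisely under (\ref{17}). Your computation checks out (in particular $\frac{b^2 d^{n-1}}{2(d^{n}+\alpha ab)}<a$ is literally equivalent to (\ref{17})), and it buys something the paper's argument does not: an explicit bound on how far $f_d$ can descend and an upper bound $t_1\le -b d^{n-1}/(d^{n}+\alpha ab)$ on the location of the minimum, with no appeal to limits at infinity or to a contradiction scheme. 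The paper's $H$-function argument is shorter to state and, as the authors remark afterwards, extends to $m\in(-2\alpha,-\alpha)$, where $m+\alpha>0$ fails and your step ``$(m+\alpha)\int_0^t f_d'^2\,ds\ge 0$'' breaks down; so your argument is more elementary but tied to $m\in(-\alpha,0)$. One small point you gloss over: you should say explicitly that $f_d$, $f'_d$, $f''_d$ are all bounded on $[0,t_1)$ so that Proposition~\ref{cop} guarantees $t_1<T_d$; this is implicit in your bounds but worth a sentence. The final shift $h(t)=f_d(t+t_1)$ and the appeal to Lemmas~\ref{l1}--\ref{l2} coincide with the paper's closing step.
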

\begin{proof}{\rm
Since $  a, d > 0 $ and $ b  < 0,  $ there exists a real $ t_0 > 0 $
such that $ f_d $ is positive, decreasing and convex on $(0,t_0).$
Define
\[ T = \sup\left\{t: f_d(s) > 0, f_d^\prime(s) < 0,
f''_d(s) > 0, \mbox{ for all } s \in (0,t)\right\}.\]
The real number $ T $  is larger than $ t_0 $ and may be infinite. \\
Assume that $ T = \infty.$  Then  the function $ f_d $ has a finite
limit at infinity and $ f_d^\prime(t) $ (and $f''_d$) go to zero as $ t \to \infty.
$ Since the
function
\[ H = f_d \vert f_d''\vert^{n-1}f''_d-
\frac{1}{2}{f'_d}^2\vert f''_d\vert^{n-1} + \alpha
f_d^{2}f_d^\prime, \] satisfies
\[ H^\prime = f(f_d^\prime)^2\left[ m + 2\alpha  +
\frac{\alpha(n-1)}{2n}\right] -
\frac{m(n-1)}{2n}(f_d^\prime)^4(f''_d)^{-1},\] thanks to
(\ref{11})$_1$, we deduce that $ H $ is increasing on $
(0,\infty).$ Hence for $t > 0$ we have
\[ H(0) < \lim_{t\rightarrow \infty} H(t) = 0, \]
which yields to
\[ad^{n} - \frac{1}{2}b^2d^{n-1} + \alpha a^2b < 0.\]
A contradiction. Therefore $T$ is infinite. Next, we assume that $f_d(T) = 0 $ or $ f''_d(T) = 0.$ Arguing as above we deduce $
H(0)  < 0 $ and then we get  a contradiction.  In conclusion if
condition (\ref{17}) holds the  function $ f_d $ has a local
positive minimum at some $ t_1 > 0.$ We use Theorem \ref{th1}
 to
deduce that the
  new function $ h(t) = f_d(t+t_1) $ is global, unbounded and
satisfies $ h^\prime(\infty)=h''(\infty)=0.$ The proof is
finished.}
\end{proof}\\
\begin{remark}{\rm
We notice that we can extend the results of Theorem \ref{th1} to the case $(-2\alpha ,-\alpha )$ by using the function $H$ defined above, as in the work by Guedda \cite{zamp} for the Newtonian case.
}
\end{remark}
\section{Asymptotic behavior}
In this section we shall derive the asymptotic behavior of any
possible global unbounded solution to (\ref{11})
for $ m \in (-2\alpha,0).$
  First we give the following result
\begin{lemma}\label{l4}
Let $f$ be a positive solution to (\ref{11}) for $m\in(-2\alpha,0)$. Then $f'$ goes to zero at infinity and $f''$ is negative.
\end{lemma}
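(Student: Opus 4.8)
The plan is to read off all the asymptotic information from two monotone quantities attached to the equation: the energy $E$ of (\ref{16}), which satisfies $E'=-\alpha f(f'')^2\le 0$ precisely because $f>0$ (here $\alpha>0$, since the interval $(-2\alpha,0)$ is nonempty), and the once-integrated identity (\ref{13}). Throughout I use only that $f$ is positive and unbounded, so that $f(t)\to+\infty$.

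First I would fix the sign structure of $f''$. At any point $t_*$ where $f''(t_*)=0$ and $f'(t_*)\neq 0$, equation (\ref{11})$_1$ gives
\[
\bigl(|f''|^{n-1}f''\bigr)'(t_*)=m\,f'(t_*)^2<0,
\]
since $m<0$; as $g:=|f''|^{n-1}f''$ is $C^1$ and carries the sign of $f''$, the function $g$ decreases strictly through $0$ at $t_*$, so $f''$ passes from positive to negative there. Hence $f''$ can change sign at most once, always from $+$ to $-$ (the degenerate case $f'(t_*)=0$ is irrelevant once $f'\to 0$ is known, and in any event cannot produce a $-\to+$ crossing). Consequently $f'$ is eventually monotone and admits a limit $l\in[-\infty,+\infty]$; since $f\to+\infty$ forces $f'\ge 0$ for large $t$, in fact $l\ge 0$.

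Next I would prove $l=0$, which is conclusion (a). Because $E$ is non-increasing, $E(t)\le E(0)$; evaluating $E$ where $f'\ge 0$, dropping the non-negative term $-\tfrac{m}{3}(f')^3\ge0$ bounds $|f''|$, while keeping it bounds $f'$, so both $f'$ and $f''$ are bounded and $l<\infty$. Feeding $f'\to l$ into (\ref{13}) exactly as in Lemma \ref{l2}, using $f(t)f'(t)\sim l^2 t$ and $\int_0^t (f')^2\,ds\sim l^2 t$, yields
\[
|f''|^{n-1}f''(t)=\bigl(-\alpha+(m+\alpha)\bigr)l^2 t+o(t)=m\,l^2 t+o(t),
\]
as $t\to\infty$. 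If $l>0$ this forces $|f''|^{n-1}f''\to-\infty$, hence $f''\to-\infty$, contradicting the finiteness of $l=\lim f'$. Therefore $l=0$, i.e. $f'(\infty)=0$.

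Finally, conclusion (b): with $f'\to 0$ and $f\to+\infty$, suppose $f''>0$ for all large $t$. Then $f'$ would increase toward $0$, hence $f'<0$ eventually, forcing $f$ to decrease — incompatible with $f\to+\infty$. So $f''<0$ for large $t$, and combined with the at-most-one ($+\to-$) sign change this gives $f''<0$. The main obstacle is the range $m\in(-2\alpha,-\alpha]$, where $m+\alpha\le0$ so the ``no local maximum'' monotonicity of (\ref{13}) exploited for $m\in(-\alpha,0)$ is unavailable; the argument above sidesteps this by relying on the energy $E$ (whose decay needs only $f>0$) to bound $f'$ and $f''$, and on the linear growth $f\sim l t$ in (\ref{13}) to kill a nonzero limit. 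This asymptotic balancing in (\ref{13}), rather than the sign bookkeeping, is the delicate step.
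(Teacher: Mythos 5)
Your proof is correct and follows essentially the same route as the paper's (which is only sketched there): the Lyapunov function $E$ of (\ref{16}) to bound $f'$ and $f''$, the integrated identity (\ref{13}) to rule out a nonzero limit $l$ of $f'$, and a sign analysis to conclude $f''<0$. You merely supply the details (notably the one-sign-change argument for $f''$ and the explicit assumption that $f$ is unbounded, which the paper also uses implicitly) that the paper compresses into ``arguing as in the previous section.''
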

\begin{proof}{
Since $f$ is monotonic on $[t_0,\infty)$, $t_0$ large enough, we get the positivity of $f'$ and $f$ on
$(t_{0},\infty)$. In addition  we use the Lyapunov function to get the boundedness of $f'$ and $f''$. Arguing as in the previous section we get that $f'\rightarrow 0$ and $f''<0$ for large $t$. }
\end{proof}
\begin{proposition}\label{p} Assume that $ n > 1$ and  $m\in(-2\alpha,0)$. Let $ f $ be a positive solution to
(\ref{11}). Then
\[ \lim_{t\to \infty} f_d(t)f_d^{\prime\prime}(t) =  \lim_{t\to
\infty} \left(\vert f''\vert^{n-1} f''\right)^\prime(t) = 0.
\]
\end{proposition}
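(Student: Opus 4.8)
The plan is to show that the two stated limits are both equivalent to the single claim $f|f''|\to 0$, and then to establish that claim by turning equation (\ref{11})$_1$ into a differential inequality for the product $P:=f\,|f''|$ in which the degeneracy $f''\to 0$ acts as a strong restoring mechanism.

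First I would record the qualitative picture. By Lemma \ref{l4} there is $t_0$ with $f>0$, $f'>0$ and $f''<0$ on $(t_0,\infty)$, and $f'\to 0$. Running the energy argument of Lemma \ref{l2} once more (the function $E$ of (\ref{16}) is nonincreasing and nonnegative, tends to $0$ along a sequence on which $f''\to 0$, hence $E\to 0$, whence $|f''|^{n+1}\to 0$) I would obtain $f''\to 0$; write $v:=|f''|=-f''>0$, so $v\to 0$. On $(t_0,\infty)$ the solution is classical, and the equation reads
\[
(|f''|^{n-1}f'')'=m f'^2-\alpha f f''=m f'^2+\alpha f v .
\]
Since $f'\to 0$, both assertions follow at once from $P:=fv\to 0$: indeed $f f''=-P$ and $(|f''|^{n-1}f'')'=m f'^2+\alpha P$. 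Next I would derive the governing relation for $P$. Setting $w:=v^{n}$, the displayed equation gives $w'=|m|f'^2-\alpha P$; differentiating $P=fv$ and using $f/v^{n-1}=P/v^{n}$ yields
\[
P'=f'v+\frac{P}{n\,v^{n}}\bigl(|m|f'^2-\alpha P\bigr).
\]
The point is that as $v\to 0$ the factor $1/v^{n}$ blows up, so whenever $P$ stays bounded away from $0$ the term $-\alpha P^{2}/(n v^{n})$ dominates and drives $P$ downward.

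The decisive and most delicate step is to convert this heuristic into a bound on $\limsup P$. Suppose, for contradiction, that $\limsup_{t\to\infty}P=2\delta>0$. Since $f'\to 0$ and $v\to 0$, I can pick $T$ so large that $|m|f'^2<\tfrac{\alpha\delta}{2}$ and $n f' v^{n+1}<\alpha\delta^{2}$ for all $t\ge T$. Then at any $t\ge T$ with $P(t)>\tfrac{3}{2}\delta$ the bracket is $<-\alpha\delta$, so
\[
P'(t)<f'v-\frac{3\alpha\delta^{2}}{2\,n v^{n}}=\frac{1}{n v^{n}}\Bigl(n f' v^{n+1}-\tfrac{3}{2}\alpha\delta^{2}\Bigr)<0 .
\]
Thus on $[T,\infty)$ one has the implication $P>\tfrac32\delta\Rightarrow P'<0$, and an elementary argument (a $C^{1}$ function whose derivative is negative wherever it exceeds the level $\tfrac32\delta$ cannot exceed $\max\{P(T),\tfrac32\delta\}$ thereafter) forces $\limsup P\le\tfrac32\delta$, contradicting $\limsup P=2\delta$. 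Hence $P=f|f''|\to 0$, and the two limits of the Proposition follow from the reduction above.

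I expect the main obstacles to be precisely the facts that make the above computation legitimate: securing $f''<0$ together with $v\to 0$, so that $v^{n}$ is smooth and strictly positive and the degenerate coefficient $1/v^{n}$ stays under control, and the careful passage from the pointwise sign of $P'$ to the global bound on $\limsup P$. The algebraic reduction and the sign of the bracket are routine once these are in place.
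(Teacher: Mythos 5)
Your argument is correct in substance but follows a genuinely different route from the paper's. The paper differentiates equation (\ref{11})$_1$ once more on the region where $f''<0$ (where the equation is non-degenerate and the solution is smooth), obtains the first-order linear relation (\ref{18}) for $f'''$ with integrating factor $e^{G}$, concludes that $f'''$ has at most one zero and hence $f'''>0$ for large $t$, and then uses this sign to show that $t\mapsto(\vert f''\vert^{n-1}f'')'+\frac{\alpha-2m}{2}{f'}^2$ is positive and decreasing, whence it converges and its limit is forced to be $0$ by $f'\to0$; the limit of $ff''$ then comes from the equation. You instead reduce both limits to the single claim $P:=f\vert f''\vert\to0$ and derive a Riccati-type inequality for $P$ in which the degenerate factor $v^{-n}$ (with $v=\vert f''\vert\to0$) acts as a restoring term, closing with a barrier argument on $\limsup P$. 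Your route avoids the fourth derivative and the sign of $f'''$ altogether, which makes it more elementary and arguably more robust; the paper's route produces the auxiliary monotone quantity and the eventual sign of $f'''$, which carry independent information. Both proofs rest on the same groundwork from Lemma \ref{l4} and the energy function (\ref{16}) (namely $f>0$, $f'>0$, $f''<0$, $f'\to0$, $f''\to0$), and your algebra ($w=v^{n}$, $w'=\vert m\vert{f'}^2-\alpha P$, $P'=f'v+\frac{P}{nv^{n}}(\vert m\vert{f'}^2-\alpha P)$) checks out.

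One step needs a small patch. The parenthetical ``elementary argument'' shows only that $P$ cannot rise above $\max\{P(T),\tfrac32\delta\}$; it does not by itself exclude the scenario in which $P(t)>\tfrac32\delta$ for \emph{all} $t\ge T$ and $P$ decreases to a limit in $[\tfrac32\delta,2\delta]$, in which case $\limsup P$ could still equal $2\delta$. That scenario is ruled out by the very inequality you derived: if $P>\tfrac32\delta$ on all of $[T,\infty)$, then $P'\le-\frac{\alpha\delta^{2}}{2nv^{n}}\le-\frac{\alpha\delta^{2}}{2n(\sup_{[T,\infty)}v)^{n}}<0$ uniformly (note $v$ is bounded because $v\to0$), so $P\to-\infty$, contradicting $P>0$. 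With that case added, the conclusion $\limsup P\le\tfrac32\delta$ and the final contradiction stand, and the Proposition follows as you describe.
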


\begin{proof}{\rm
Thanks to lemma (\ref{l4}) we have  $ f^\prime(t) > 0, f''(t) < 0 $ for all
$ t > t_0,\  t_0 $ large enough and $ f^\prime $  and $
f^{\prime\prime} $ tend to 0 as $ t \to \infty.$ Then equation (\ref{11})$_1$ can be
written as
\[ f''' + \frac{\alpha}{n} ff''\vert f''\vert^{1-n} =
\frac{m}{n}{f^{\prime}}^{2}\vert f''\vert^{1-n},\quad \forall t
> t_0.\]
 By differentiation we have
\begin{equation}\label{18}
f^{(iv)} + f^{\prime\prime\prime}\left[\frac{\alpha(2-n)}{n}\vert
f''\vert^{1-n}f-\frac{m(1-n)}{n}\vert
f''\vert^{-n-1}f''{f^\prime}^2\right] =
-\frac{\alpha-2m}{n}f^\prime f^{\prime\prime}|f''|^{1-n}.
\end{equation}
Then the function $ f^{\prime\prime\prime}e^{G} $ is monotonic
increasing on  $ (t_0,\infty), $ where
\[ G^\prime =\frac{\alpha(2-n)}{n}\vert
f''\vert^{1-n}f-\frac{m(1-n)}{n}\vert
f''\vert^{-n-1}f''{f^\prime}^2. \] This indicates that the function $ f^{\prime\prime\prime} $ has at most
one zero. Because $ f''$ is negative and goes to 0 at infinity,
we deduce that  $ f'''(t) > 0 $ on $ (t_1,\infty),$ for  $t_1 $
large. On the other hand,  from (\ref{11})$_1$ we deduce
\[ (\vert f''\vert^{n-1}f'')''  +(\alpha-2m)f^\prime f''= -\alpha f f'''.\] Therefore the function
$t \longmapsto  (\vert f''\vert^{n-1}f'')^\prime  +\frac{\alpha-2m}{2}{f^\prime}^2$
is positive and monotonic decreasing on
$(\inf\left\{t_0,t_1\right\},\infty). $ Together with the fact that
$ f^\prime $ tends to 0 as $ t\to \infty $ we deduce that
\[ \lim_{t\to+\infty} \left(\vert f''\vert^{n-1}
f''\right)^\prime(t)=0\] and then  we conclude that $ f f''(t) \to 0 $ as $
t\to\infty,$ thanks to (\ref{11})$_1$.}
\end{proof}
\begin{proposition}\label{pp}  Let $ f $ be a solution to (\ref{11}) where
$ n>1,$ $m \in (-2\alpha,0).$ Then
\[ \lim_{t\to+\infty} ff''(t) = \left\{\begin{array}{ll}
\infty,&\quad \mbox{ if }\,  m + \alpha > 0,\\
\\
L \in (0,\infty),&\quad \mbox{ if }\,   m +\alpha = 0,\\
\\
0, &\quad \mbox{ if }\,   m +\alpha < 0.
\end{array}\right.\]
\end{proposition}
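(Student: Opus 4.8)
The plan is to extract the entire trichotomy from the first integral (\ref{13}). By Lemma \ref{l4} and Proposition \ref{p} a positive solution satisfies $f'>0$, $f''<0$ near infinity with $f'\to0$, $f''\to0$ and $|f''|^{n-1}f''\to0$; in particular the product $ff''$ itself tends to $0$, so the nonzero and infinite values recorded in the statement are necessarily carried by the companion product $ff'$ that sits inside (\ref{13}). Abbreviating $C:=|d|^{n-1}d+\alpha ab$ (the quantity kept positive throughout the existence analysis of the previous section) and $I(t):=\int_0^t f'(s)^2\,ds$, identity (\ref{13}) reads
\[
|f''|^{n-1}f''(t)+\alpha ff'(t)=C+(m+\alpha)I(t).
\]
Sending $t\to\infty$ and discarding the term $|f''|^{n-1}f''\to0$ supplied by Proposition \ref{p}, the limit of $\alpha ff'$ is seen to be governed entirely by $\operatorname{sign}(m+\alpha)$ together with the convergence or divergence of $I(\infty)=\int_0^\infty f'(s)^2\,ds$.

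First I would settle the critical case $m+\alpha=0$: the integral term drops out, so $\alpha ff'\to C$, and since $C>0$ this produces a finite positive limit $L=C/\alpha\in(0,\infty)$. For $m+\alpha<0$ I would first show $I(\infty)<\infty$ — were $I$ to diverge, the right-hand side would tend to $-\infty$, whereas the left-hand side stays bounded below (its first term tends to $0$ and $\alpha ff'>0$), a contradiction — so that $\alpha ff'$ has a finite nonnegative limit $K$; the bootstrap argument below then forces $K=0$, yielding the limit $0$. For $m+\alpha>0$ I would instead prove $I(\infty)=\infty$, so that the right-hand side, and hence $\alpha ff'$, tends to $+\infty$; the only alternative, $I(\infty)<\infty$, is once more excluded by the bootstrap.

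The crux — and the main obstacle — is the self-consistency step controlling $I(\infty)$. The mechanism is a rate argument: if $\alpha ff'$ tended to a strictly positive finite limit $\mu$, then $\tfrac12(f^2)'=ff'\to\mu/\alpha$, whence $f^2\sim(2\mu/\alpha)\,t$, $f\sim c\sqrt{t}$, $f'\sim c/(2\sqrt{t})$ and $f'(t)^2\sim c'/t$, so that $I(t)\sim c'\log t\to\infty$. This divergence is consistent with the displayed identity only when $m+\alpha>0$, where it upgrades the would-be finite limit to the genuine blow-up $ff'\to\infty$, and it is incompatible with a positive finite limit when $m+\alpha\le0$; in the subcritical range $m+\alpha<0$ it promotes ``$K$ finite'' to ``$K=0$''. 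Making this loop rigorous — ideally without leaning on the heuristic power-law rate $f\sim c\,t^{\alpha/(\alpha-m)}$ quoted in the introduction — is the delicate point; once it is secured, the three cases above close and the proposition follows.
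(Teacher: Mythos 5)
Your reading of the statement is correct and matches the paper: the $ff''$ in the proposition is a typo for $ff'$ (Proposition \ref{p} already forces $ff''\to 0$, and both the paper's proof and its use in Theorem \ref{th3} work with $ff'$). Your route is essentially the paper's route as well: the first integral (\ref{13}), the vanishing of $\vert f''\vert^{n-1}f''$ at infinity, monotonicity of the integral term according to the sign of $m+\alpha$, and the $\sqrt{t}$-rate argument to exclude a finite positive limit when $m+\alpha\neq 0$. On your closing worry: the ``delicate loop'' is already rigorous exactly as you sketched it, and is the paper's own ``$f'\sim\sqrt{L/\eta}$'' step. From $\alpha ff'\to\mu>0$, integrate $(f^2)'=2ff'$ to get $f^2\sim(2\mu/\alpha)t$, then write $f'=(ff')/f$ --- no differentiation of asymptotics is needed --- to obtain $f'^2\sim \mu/(2\alpha t)$, so $I(t)$ diverges logarithmically; no appeal to the power-law profile of Theorem \ref{th3} is required.

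The one genuine gap is your positivity assumption $C=\vert d\vert^{n-1}d+\alpha ab>0$, imported from the shooting hypotheses of Section 4. The proposition is stated for an arbitrary unbounded solution of (\ref{11}), whose initial data need not satisfy this, and the paper's proof spends its entire first step on precisely this point: it shows intrinsically that $\vert f''\vert^{n-1}f''(t_0)+\alpha f f'(t_0)>0$ at some $t_0$. The argument is that otherwise, since $f''\to 0$ (so eventually $\vert f''\vert<1$, and then $\vert f''\vert^{n-1}f''\geq f''$ because $n>1$ and $f''<0$), one would have $f''+\alpha ff'\leq 0$ for large $t$, making $f'+\frac{\alpha}{2}f^2$ decreasing while it must tend to infinity with $f$ --- a contradiction. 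Running (\ref{13}) from $t_0$ instead of from $0$ then restores your argument verbatim. Note the asymmetry: for $m+\alpha\leq 0$ the right-hand side of (\ref{13}) is nonincreasing, so positivity at $t_0$ actually implies $C>0$ after all; but for $m+\alpha>0$ a general solution may well have $C\leq 0$, and then your exclusion of the degenerate alternative ($I(\infty)<\infty$ with limit $Q_\infty=0$, i.e. $ff'\to 0$ despite $f\to\infty$) collapses without the $t_0$ step. The same positivity is also what yields $L>0$, rather than merely $L\geq 0$, in the critical case $m+\alpha=0$. With that lemma inserted, your proof closes and coincides with the paper's.
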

\begin{proof}{\rm
Let $ f $ be a global solution to (\ref{11}). First
we claim that there exists $ t_0 \geq 0 $ such that
\[\vert f''\vert^{n-1}f''(t_0) + \alpha f f'(t_0) > 0. \]
Suppose not; that is
\[\vert f''\vert^{n-1}f''(t) + \alpha f f^\prime(t) \leq 0,\]
for all $ t \geq 0.$  Since $ f''(t) \to 0 $ as $ t \to \infty $
the following
\[ f''+ \alpha ff^\prime \leq 0\]
holds on some $ (t_1,\infty), t_1 $ large. Consequently the
function $ f^\prime + \frac{\alpha}{2}f^2 $ is decreasing and goes
to infinity with $ t,$ which is absurd. Now we use  the identity
\[\vert f''\vert^{n-1}f''(t) + \alpha f f^\prime(t) = \vert
f''\vert^{n-1}f''(t_0) + \alpha f f^\prime(t_0) +
(m+\alpha)\int_{t_0}^t {f^\prime}^2(s)ds,\] to deduce that $ f
f^\prime $ has a limit $ L \in [0,\infty] $  at infinity. This
limit is finite for $ \alpha + m =0. $ Assume that $ \alpha + m \neq 0.$ If $L\in (0,\infty)$ we get immediately that $f' \sim \sqrt{\frac{L}{\eta}}$ at infinity which implies that $ff'\rightarrow \infty$. A contradiction. Consequently $L\in \left\{ 0,\infty\right\}$, we use again the above identity to conclude that $ L = \infty $ if $ m+\alpha > 0 $ and $L=0$ if $ m+\alpha < 0 $.}
\end{proof}\\
\begin{remark}{\rm
We stress that the condition
\[\vert f''\vert^{n-1}f''(t_0) + \alpha f f'(t_0) > 0,\quad
f'(t_0) \geq 0  \] is necessary and sufficient to obtain a
global solution converging to plus infinity with $ t $ in the case
$ m \in (-\alpha,0).$ \\

 Now we are ready to give the result concerning the large $
t-$behavior of solutions to (\ref{11}).}
\end{remark}

\begin{theorem} \label{th3} Suppose  $ n>1, - 2\alpha < m  < 0. $  Let $ f $ be
a solution to (\ref{11}) such that \, $f\rightarrow \infty$. Then there exists a constant, $ A > 0, $ such that
\begin{equation}\label{19}
f(t) = t^{\frac{\alpha}{\alpha-m}}(A+ o(1)),
\end{equation}
as $ t \to \infty.$
\end{theorem}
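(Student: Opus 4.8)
The plan is to straighten out the expected power law by the change of unknown
\[ w := f^{\frac{\alpha-m}{\alpha}}, \qquad \text{equivalently}\qquad z:=f'f^{-m/\alpha}=\tfrac{\alpha}{\alpha-m}\,w'. \]
The virtue of $w$ is that the two convective terms of (\ref{11}) cancel identically: differentiating twice and inserting $\alpha f f''=m f'^2-(|f''|^{n-1}f'')'$ from (\ref{11})$_1$ gives, after the terms in $f^{\frac{\alpha-m}{\alpha}-2}f'^2$ annihilate,
\[ w'' = -\,\frac{\alpha-m}{\alpha^{2}}\,f^{-\mu}\,(|f''|^{n-1}f'')',\qquad \mu:=\frac{\alpha+m}{\alpha}. \]
By Lemma \ref{l4} and Proposition \ref{p} I may fix $t_1$ so large that on $(t_1,\infty)$ one has $f>0$, $f'>0$, $f''<0$, $f'\to0$, $f''\to0$ and $f'''>0$; the last fact gives $(|f''|^{n-1}f'')'=n(-f'')^{n-1}f'''>0$, hence $w''<0$. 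Thus $w$ is increasing and concave, $w'>0$ is decreasing, and $\ell:=\lim_{t\to\infty}w'(t)\in[0,\infty)$ exists.

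Granting $\ell>0$, the theorem follows at once. Concavity gives $w(t)/t=t^{-1}\bigl(w(t_1)+\int_{t_1}^{t}w'\bigr)\to\ell$, so $w(t)\sim\ell t$, and raising to the power $\alpha/(\alpha-m)$,
\[ \frac{f(t)}{t^{\alpha/(\alpha-m)}}=\Bigl(\frac{w(t)}{t}\Bigr)^{\alpha/(\alpha-m)}\longrightarrow \ell^{\alpha/(\alpha-m)}=:A>0, \]
which is precisely (\ref{19}). Everything therefore reduces to the single point $\ell>0$; this is the heart of the matter.

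When $\alpha+m=0$ it is immediate: then $\mu=0$, $z=ff'$, and Proposition \ref{pp} gives $z\to L\in(0,\infty)$, so $\ell=\frac{\alpha-m}{\alpha}L>0$. For $\alpha+m>0$ I would argue as follows. Integrating $w''$ gives $\ell=w'(t_1)-\frac{\alpha-m}{\alpha^{2}}\,I$ with $I:=\int_{t_1}^{\infty}f^{-\mu}(|f''|^{n-1}f'')'\,dt$. Integration by parts, using $f^{-\mu}|f''|^{n-1}f''\to0$ at infinity (as $\mu>0$) and $|f''|^{n-1}f''=-(-f'')^{n}<0$, yields $I\le f^{-\mu}(t_1)(-f''(t_1))^{n}$. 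Since $-\mu+1=-m/\alpha$, one has $w'(t_1)=\frac{\alpha-m}{\alpha}f^{-\mu}(t_1)f(t_1)f'(t_1)$, whence
\[ \ell\ \ge\ \frac{\alpha-m}{\alpha^{2}}\,f^{-\mu}(t_1)\bigl[\alpha f(t_1)f'(t_1)-(-f''(t_1))^{n}\bigr]. \]
Finally, identity (\ref{13}) reads $\alpha ff'-(-f'')^{n}=|d|^{n-1}d+\alpha ab+(m+\alpha)\int_{0}^{t}f'^{2}$, whose right-hand side tends to $+\infty$ because $ff'\to\infty$ (Proposition \ref{pp}) forces $\int_0^\infty f'^2=\infty$; hence the bracket is strictly positive for $t_1$ large, giving $\ell>0$.

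I expect the remaining case $\alpha+m<0$ to be the genuine obstacle. There Proposition \ref{pp} gives only $ff'\to0$, and in the estimate above $\mu<0$, so $f^{-\mu}\to\infty$ and the boundary term $f^{-\mu}|f''|^{n-1}f''$ at infinity is an indeterminate $\infty\cdot0$; moreover the integration-by-parts correction changes sign, so the clean upper bound on $I$ is lost. What makes $\ell>0$ delicate is structural: the exponent $\alpha/(\alpha-m)$ is exactly the one for which the convective part of (\ref{11}) is in asymptotic balance for \emph{every} value of $A$ (the one-parameter family of self-similar profiles), so $\ell>0$ cannot be read off from (\ref{13}) alone and must be extracted from the quantitative smallness of the positive degenerate term $(|f''|^{n-1}f'')'$ — which is where $n>1$ and Proposition \ref{p} (giving $\int_{t_1}^\infty(|f''|^{n-1}f'')'=(-f''(t_1))^n<\infty$) enter in an essential, non-circular way. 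Closing the case $\alpha+m<0$ would require replacing the lost bound on $I$ by an a~priori lower bound $f\gtrsim t^{\alpha/(\alpha-m)}$ obtained directly from the equation.
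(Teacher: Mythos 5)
Your reduction of the theorem to the concavity of $w=f^{(\alpha-m)/\alpha}$ is correct: the identity $w''=-\frac{\alpha-m}{\alpha^{2}}f^{-\mu}\left(|f''|^{n-1}f''\right)'$ checks out, the sign $f'''>0$ for large $t$ is available (it is established inside the proof of Proposition \ref{p}, though not in its statement), and your treatment of the cases $m+\alpha>0$ and $m+\alpha=0$ is complete and arguably cleaner than the paper's, which for $m+\alpha>0$ divides (\ref{11})$_1$ by $ff'$ and integrates to get a finite positive limit for $f^{m}f'^{-\alpha}$, and for $m=-\alpha$ reads the answer off (\ref{13}). But the theorem asserts (\ref{19}) on the whole range $-2\alpha<m<0$, and you leave $-2\alpha<m<-\alpha$ open; as a proof of the statement this is a genuine gap, and it is precisely the case on which the paper spends most of its effort.

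The paper's device for that case is the natural completion of your computation and is worth recording. Instead of working with $w'$ (equivalently $f^{-m/\alpha}f'$), which is \emph{decreasing}, so that one must still exclude $\ell=0$, it perturbs it by terms carrying a factor $|f''|^{n-1}$: it sets $\Psi=\varphi(f)|f''|^{n-1}f''-\frac{1}{2}\varphi'(f)f'^{2}|f''|^{n-1}+\alpha\varphi(f)ff'$ with $\varphi(s)=s^{-(m+\alpha)/\alpha}$. The choice of $\varphi$ kills the $f'^{2}$-term in $\Psi'$ (the same cancellation you exploit), and for $-2\alpha<m<-\alpha$ one has $\varphi'>0$ and $\varphi''<0$, so with $f''<0<f'''$ both surviving terms of $\Psi'$ are nonnegative: $\Psi$ is \emph{increasing}. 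Positivity of its limit then only requires $\Psi(t_{1})>0$ at one large $t_{1}$, which is obtained by contradiction (otherwise $f''+\alpha ff'\le0$ eventually, making $f'+\frac{\alpha}{2}f^{2}$ decreasing although it tends to $+\infty$). Granting that the two corrector terms vanish at infinity --- this is exactly the $\infty\cdot0$ indeterminacy you flag, and the paper passes over it rather quickly --- one gets $f^{-m/\alpha}f'\to L/\alpha>0$, i.e.\ your $\ell>0$. So the missing idea is not an a priori lower bound $f\gtrsim t^{\alpha/(\alpha-m)}$ but a switch of monotone quantity: replace the decreasing $w'$ by an increasing $O(|f''|^{n-1})$-perturbation of it, for which positivity of the limit reduces to positivity at a single point.
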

\begin{proof}{\rm
Let $ f $ be a global solution to (\ref{11}). First we prove the result for the case $m+\alpha > 0$.\\
 Let $t_0 $ be a real number such that \( f'' < 0 \) and \( f''' > 0
$ on $(t_0,\infty).$ Dividing equation (\ref{11})$_1$ by $ff'$ gives
$$ \frac{(\vert f''\vert^{n-1}f'')'}{ff'}=m \frac{f'}{f}-\alpha \frac{f''}{f'}.$$
Integrating over $(t_1,t)$, for $t_1 > t_0$, gives
$$ \displaystyle {\int_{t_1}}^{t} \frac{(\vert f''\vert^{n-1}f'')'}{ff'} ds = \log{(f^{m}(t)f'^{-\alpha }(t)}
- \log{(f^{m}(t_1)-f'^{-\alpha }(t_1))}.$$
According to Proposition \ref{pp}, $ff'$ goes to infinity with $t$ and then the left hand side of the above is integrable, consequently $f^{m}f'^{-\alpha }$ has a positive finite limit at infinity. The desired asymptotic behavior (\ref{19}) follows by a simple integration.\\
Now we deal with the case $m+\alpha < 0$.
For this sake we define
\[ \Psi = \varphi(f)\vert
f''\vert^{n-1}f''-\frac{1}{2}\varphi^\prime(f)(f^\prime)^2\vert
f''\vert^{n-1}+\alpha \varphi(f)ff^\prime,\] where $ \varphi $ is a
smooth function. Then if follows from (\ref{11})$_1$
\[ \Psi^\prime(t) = {f^\prime}^2\left[ \alpha f\varphi^\prime(f) +
(\alpha+m)\varphi\right]-\frac{1}{2}\varphi''(f){f^\prime}^3\vert
f''\vert^{n-1} - \frac{n-1}{2}\varphi^\prime(f){f^\prime}^2\vert
f''\vert^{n-3}f''f'''. \] Let the function $\varphi$ be defined by
\[ \varphi(s) = s^{-\frac{m+\alpha}{\alpha}}.\]
It satisfies the following differential equation
\[ \alpha s\varphi^\prime(s) + (\alpha+m)\varphi=0.\]
 This implies that
\[ \Psi^\prime = -\frac{1}{2}\varphi''(f){f^\prime}^3\vert
f''\vert^{n-1} - \frac{n-1}{2}\varphi^\prime(f){f^\prime}^2\vert
f''\vert^{n-3}f''f''' \geq 0,\]
\[ \Psi = \varphi(f)\left[\vert
f''\vert^{n-1}f''-\frac{\alpha+m}{2\alpha f}(f^\prime)^2\vert
f''\vert^{n-1}+\alpha ff^\prime\right],\] and then
\[  \vert \Psi^\prime(t)\vert  \leq
\varepsilon\left[f^{-\frac{3\alpha+m}{\alpha}}f^\prime
+(n-1)(-f'')^{n-2}f'''\right],\] for all $ t \geq t_0,
t_0 $ large. Therefore $ \Psi^\prime $ is integrable on $[0,\infty)$ and then $ \Psi
$ has a finite limit at infinity, say $ L.$ Next we show
that $ L > 0.$ It will be sufficient to show that $ \Psi(t_1) > 0 $
for some $ t_1 $ large. Suppose not; that is for any $ t > t_2,\, t_{2}$ large we have
\[\vert f''\vert^{n-1}f''-\frac{\alpha+m}{2\alpha f}{f^\prime}^2\vert
f''\vert^{n-1}+\alpha ff^\prime \leq 0.\] Since $\alpha+m < 0$ then  $$f''+\alpha  ff' \leq 0,$$
 from which we deduce, as above, that $f'+\frac{\alpha}{2} f^{2}$ is  a decreasing  function going to infinity with $t$. A contradiction. We conclude that $\lim_{t\to\infty}f^{-\frac{m}{\alpha}}f^\prime =
\frac{L}{\alpha}.$ Finally,  a simple integration leads to estimate  (\ref{19}).\\
To finish, we pay attention to the case $m=-\alpha $. Here the identity (\ref{13}) leads to
$$ |f''|^{n-1}f'' + \alpha ff'= |\gamma|^{n-1}\gamma + \alpha ab.$$
According to Theorem \ref{th1}, $f$ is global and satisfies $f \sim t^{\frac{1}{2}}$ at infinity.
}
\end{proof}\\
\section{Conclusion}
The laminar two-dimensional steady boundary layer flow, of a
non-Newtonian incompressible fluid, over a stretching surface have
been considered. Using the shooting method, existence of
 global unbounded similarity  solutions have been shown,  the
dependency of those solutions on the power-law index has been
investigated, and their asymptotic behavior was also discussed.\\
Coming back  to the original problem (\ref{2}),(\ref{3})  we
find that, for $-2\alpha < m <0,$ the stream function
satisfies
$$\displaystyle \psi(x,y) \sim y^{\frac{1+m(2n-1)}{1+m(n-2)}} \quad \mbox{as} \quad yx^{\frac{(2-n)m-1}{n+1}}\to \infty.$$

\section*{Acknowledgments}
 The authors are indebted to the anonymous referees for valuable comments and
 Prof. Robert Kersner  for stimulating discussions. This work is an outcome of a PhD thesis \cite{thesis}, it was partially supported by la Direction des Affaires Internationales
(UPJV) Amiens France and by the Research Program: PAI No MA/05/116 (France-Morocco Scientific Cooperation Volubilis).

\end{document}